\newif\ifreport\reporttrue
\newtheorem{lemma}{Lemma}
\newtheorem{Theorem}{Theorem}
\newcommand\xb{\ensuremath{{\bm x}}}
\newcommand\yb{\ensuremath{{\bm y}}}
\newcommand\Ab{\ensuremath{{\bf A}}}
\newcommand\Bb{\ensuremath{{\bf B}}}
\newcommand\Cb{\ensuremath{{\bf C}}}
\newcommand\Db{\ensuremath{{\bf D}}}
\newcommand\Eb{\ensuremath{{\bf E}}}
\newcommand\Gb{\ensuremath{{\bf G}}}
\newcommand\Hb{\ensuremath{{\bf H}}}
\newcommand\Ib{\ensuremath{{\bf I}}}
\newcommand\Mb{\ensuremath{{\bf M}}}
\newcommand\Nb{\ensuremath{{\bf N}}}
\newcommand\nb{\ensuremath{{\bf n}}}
\newcommand\Pb{\ensuremath{{\bf P}}}
\newcommand\Qb{\ensuremath{{\bf Q}}}
\newcommand\Vb{\ensuremath{{\bf V}}}
\newcommand\Xb{\ensuremath{{\bf X}}}
\newcommand\Ub{\ensuremath{{\bf U}}}
\newcommand\Deltab{\ensuremath{{\bm \Delta}}}
\newcommand\Phib{\ensuremath{{\bm \Phi}}}
\newcommand\tr{\ensuremath{{\rm Tr}}}
\newcommand\Det{\ensuremath{{\rm det}}}
\newcommand\Cplx{\ensuremath{{\mathbb{C}}}}
\newcommand\Real{\ensuremath{{\mathbb{R}}}}
\newcommand\Hmit{\ensuremath{{\mathbb{S}}}}
\renewcommand{\maketag@@@}[1]{\hbox{\m@th\normalsize\normalfont#1}}%
\begin{document}

\bibliographystyle{IEEEtran}

\title{Capacity of Compound MIMO Gaussian Channels with Additive Uncertainty}

\author{Yin Sun, C. Emre Koksal, and Ness B. Shroff\\
Dept. of ECE, The Ohio State University, Columbus, OH\\
\thanks{This paper will be presented in part at IEEE ISIT 2013.}
}

\maketitle
\makeatletter
    \def\@eqnnum{{\normalsize \normalcolor (\theequation)}}
\makeatother
\begin{abstract}
This paper considers reliable communications over a multiple-input multiple-output (MIMO) Gaussian channel, where the channel matrix is within a bounded channel uncertainty region around a nominal channel matrix, i.e., an instance of the compound MIMO Gaussian channel. We study the {optimal transmit covariance matrix design} to achieve the capacity of compound MIMO Gaussian channels, where the channel uncertainty region is characterized by the spectral norm. This design problem is a challenging non-convex optimization problem. However, in this paper, we reveal that this problem has a \emph{hidden convexity} property, {which can be exploited to map the problem into} a convex optimization problem. We first prove that the optimal transmit design is to diagonalize the nominal channel, and then show that the duality gap between the capacity of the compound MIMO Gaussian channel and the {min-max channel capacity} is zero, which proves the conjecture of Loyka and Charalambous (\emph{IEEE Trans. Inf. Theory}, vol. 58, no. 4, pp. 2048-2063, 2012). The key tools for showing these results are a new matrix determinant inequality and some unitarily invariant properties.
\end{abstract}

\begin{IEEEkeywords}
Channel uncertainty, compound channel, hidden convexity, multiple antenna.
\end{IEEEkeywords}
\IEEEpeerreviewmaketitle

\section{Introduction}
Multiple-input multiple-output (MIMO) techniques have been {extensively used}
to improve the spectral efficiencies of wireless communications. The performance of MIMO communications relies on access to the channel state information (CSI). When the CSI is perfectly known at the transmitter, the optimal power allocation is to diagonalize the channel \cite{Telatar1999}. However, in practice, the transmitter often has some channel uncertainty, which can result in a significant rate loss, if not taken into consideration in the transmit covariance matrix design.

{There have been two categories of research} towards reliable communications over MIMO Gaussian channels with channel uncertainty. The first category focuses on stochastic models of channel uncertainty, where the transmitter has access to only the statistics of the channel state, but not its realization. When the channel states change quickly over time, the achievable rate of the channel is described by the ergodic capacity, e.g., \cite{Telatar1999,Foschini1998,Moustakas2003,Simon2003}. On the other hand, when the channel states vary slowly, the achievable rate is characterized by the outage capacity, which is the maximum data rate achievable at any given state with probability no smaller than a specified value,
e.g., \cite{Telatar1999,Moustakas2003,Simon2003,Jafar2004,Moustakas2007,Kazakopoulos2011}. 

The second category of studies were centered on deterministic models of channel uncertainty, where the CSI is a deterministic variable within a known set, but its actual value is unknown to the transmitter. Such a model is called a \emph{compound channel} in information theory, {and its capacity is determined as the maximum of the worst-case mutual information (max-min channel capacity) of the set of possible channel realizations \cite{Blackwell1959}.} From practical viewpoint, it is the maximum data rate that can be reliably transmitted over \emph{any} channel from the given set. Characterizing the capacity of the compound channel is considered to be an important problem, and has received considerable attention.

 {In closed-loop MIMO systems, the transmitter is able to obtain inaccurate CSI, where the channel error may be caused by estimation, interpolation, mobility, and/or feedback.} In this case, the channel is typically modelled as the sum of a known nominal channel and unknown channel uncertainty. This additive channel uncertainty model has been widely utilized {in studies on the fundamental limits of MIMO channels, e.g., \cite{Visotsky2001,Goldsmith2003,Loyka2012}, and on robust transceiver designs}, e.g., \cite{Vorobyov2003,Lorenz2005,Wang2009,Huang2012}.
In \cite{Wiesel2007}, the capacity of the compound Rician MIMO Gaussian channel with additive channel uncertainty was studied, where the analysis was restricted to a rank-one nominal channel. Arbitrary rank nominal channel was considered in \cite{Denic2009}, where the channel uncertainty is limited to the singular value of the nominal channel with no uncertainty on the singular vectors.
The capacity of the compound MIMO channel with a multiplicative channel uncertainty model was obtained in \cite{Loyka2012}, where the region of channel uncertainty is described by spectral norm. In addition, the capacity of the compound MIMO Gaussian channel with additive channel uncertainty was derived in \cite{Loyka2012} for some special cases, such as high signal-to-noise ratio (SNR) limit, low SNR limit, and rank-two nominal channel. 


In this paper, we design the optimal transmit covariance matrix to achieve the capacity of the compound MIMO Gaussian channel with additive channel uncertainty.
We consider the case {where} the channel uncertainty is in a bounded region around the nominal channel matrix, which is characterized by the spectral norm.
This design problem is a challenging nonconvex optimization problem. However, we reveal that this problem possesses a \emph{hidden convexity} property, and {hence can be simplified into} a convex optimization problem. We first prove that the optimal transmit design is to diagonalize the nominal channel. {We then} show that the duality gap between the capacity of the compound MIMO Gaussian channel (max-min channel capacity) and the {min-max channel capacity} is zero, which proves the conjecture of Loyka and Charalambous \cite{Loyka2012}.
The key tools for proving these results are a new matrix determinant inequality { (Lemma~\ref{lem1})} and some unitarily invariant properties.

\section{System Model}
\label{sec:model}
\subsection{Notation}\label{sec:notation}
The following notations are used {throughout the paper}. Boldface upper-case letters denote matrices, boldface lower-case letters denote column vectors, and standard lower case letters denote scalars. Let $\Cplx^{m\times n}$ denote the set of $m\times n$ complex-valued matrices, and $\Cplx^{n}$ denote the set of $n\times n$ square complex-valued matrices. The symbol $\Hmit^n$ represents the set of $n\times n$ Hermitian matrices, and
$\Hmit_+^n$ represents the set of $n\times n$ Hermitian positive semidefinite matrices.
\ifreport
Let $\Xb(S)$ denote a submatrix of $\Xb$ obtained by deleting the rows and columns complementary to those indicated by $S$ from $\Xb$.
\else
\fi
The operator $\textrm{diag}(x_1,x_2,\cdots,x_n)$ denotes a diagonal matrix with diagonal entries given by $x_1,x_2,\cdots,x_n$. The matrix $\Ib_{n}$ denotes the $n\times n$ identity matrix.
By $\xb\geq\bm 0$, we mean that $x_i\geq0$ for all $i$.
The operators $(\cdot)^H$, $\tr(\cdot)$, and $\Det(\cdot)$ on matrices denote the Hermitian, trace, and determinant operations, respectively. Let $\sigma_i(\Ab)$ and $\lambda_i(\Ab)$ represent the singular value and eigenvalue of $\Ab$, respectively.
The vector $\bm \sigma(\Ab)\triangleq (\sigma_1(\Ab),\cdots, \sigma_{\min\{m,n\}}(\Ab))$ contains the singular values of $\Ab\in\Cplx^{m\times n}$. Let
$\bm \lambda(\Qb)\triangleq (\lambda_1(\Qb),\cdots, \lambda_{n}(\Qb))$ denote a vector containing the eigenvalues of $\Qb\in\Hmit^n$.
The singular values and eigenvalues are listed in descending order. We use $\interleave\cdot\interleave$ and $\|\cdot\|$ to denote matrix norm and vector norm, respectively.
\subsection{Channel Model}
Consider the complex-valued Gaussian vector channel:
\begin{eqnarray}\label{eq:channel}
\yb =\Hb\xb+\nb,
\end{eqnarray}
where $\yb$ is a length $r$ received vector, $\Hb$ is an $r\times t$ channel matrix, $\xb$ is a length $t$ transmitted vector with zero mean and covariance $E\{\xb\xb^H\}=\Qb$, and $\nb$ is a complex Gaussian noise vector with zero-mean and covariance $E\{\nb\nb^H\}=\Ib_r$.

The MIMO channel $\Hb$ is an unknown deterministic matrix satisfying
\begin{eqnarray}\label{eq:uncertainty}
\Hb\in \mathcal{H},
\end{eqnarray}
where $\mathcal{H}$ is the channel uncertainty region defined by
\begin{eqnarray}\label{eq:channel_con}
\mathcal{H}\triangleq\{\Hb:\interleave\Hb-\Hb_0\interleave_2\leq \varepsilon\},
\end{eqnarray}
$\Hb_0$ is the nominal channel, and $\interleave\cdot\interleave_2$ is the spectral norm defined by
\begin{eqnarray}\label{eq:spectral_norm}
\interleave\Ab\interleave_2\triangleq\max_{\|\xb\|_2\leq1}\|\Ab\xb\|_2=\max_i\{\sigma_{i}(\Ab)\}=\|\bm\sigma(\Ab)\|_\infty.
\end{eqnarray}

The spectral norm is a \emph{unitarily invariant matrix norm}. A unitarily invariant matrix norm satisfies \cite[Section 7.4.16]{Hornbook1985}
\begin{eqnarray}\label{eq:channel_con_property}
\interleave\Ub\Ab\Vb\interleave= \interleave\Ab\interleave
\end{eqnarray}
for all $\Ab\in \Cplx^{m\times n}$ and for all unitary matrices $\Ub\in\Cplx^{m}$ and $\Vb\in\Cplx^{n}$. Therefore, the channel uncertainty $\bm\Delta=\Hb-\Hb_0$ is within an isotopical set.
\ifreport
\footnote{This is different from the channel uncertainty model in \cite{Palomar2003}, where $\Hb$ is within an isotopical set such that $\Hb\Ub\in \mathcal{H}$ for all $\Hb\in \mathcal{H}$ and all unitary matrix $\Ub$.}
\else
\fi
Note that the channel uncertainty region \eqref{eq:channel_con} provides a conservative performance lower bound for the regions defined by any other unitarily invariant matrix norm, because
$$\interleave\Ab\interleave_2\geq \interleave\Ab\interleave$$
holds for all matrix $\Ab$ and all unitarily invariant matrix norm $\interleave\cdot\interleave$ \cite[Corollary 5.6.35]{Hornbook1985}.
\ifreport
More discussions {on the relationship among some matrix norms} are provided in Section \ref{sec:discussion}.
\else
More discussions on the relationship among some matrix norms are provided in \cite{report_MIMO_compound2013}.
\fi

\subsection{Power Constraint}
We consider a general transmit power constraint
\begin{eqnarray}\label{eq:power-constraint}
\Qb\in\mathcal {Q},
\end{eqnarray}
where $\mathcal{Q}\subset\Hmit_+^t$ is a nonempty compact convex set satisfying
\begin{eqnarray} \label{eq:power-region_property}
&&\Ub\Qb\Ub^H\in \mathcal {Q},\\
&&\Db(\Qb)\in \mathcal {Q}, \label{eq:power-region_property1}
\end{eqnarray}
for all $\Qb\in \mathcal {Q}$ and all unitary matrix $\Ub\in\Cplx^{t}$, where $\Db(\Qb)$ is the diagonal matrix with the same diagonal elements with $\Qb$. We say {that} a set $\mathcal {Q}$ is \emph{unitarily invariant} if it satisfies \eqref{eq:power-region_property} and \eqref{eq:power-region_property1}.
One can show that each unitarily invariant $\mathcal {Q}$ can be equivalently expressed as
\begin{eqnarray}\label{eq:convex_set}
\mathcal {Q}=\{\Qb\in \Hmit_+^t:\bm \lambda(\Qb)\in B_\mathcal {Q}, \bm \lambda(\Qb)\geq\bm 0\},
\end{eqnarray}
where $B_\mathcal {Q}$ is a nonempty compact convex set.
Two {typical} examples of unitarily invariant power constraints are the sum power constraint \cite{Telatar1999}
\begin{eqnarray}\label{eq:power-constraint1}
\mathcal{Q}_1\!\!\!\!\!\!\!\!\!\!\!&&=\{\Qb\in \Hmit_+^t:\tr(\Qb)\leq t\}, \\
&&=\{\Qb\in \Hmit_+^t:\sum_{i=1}^t \lambda_i(\Qb)\leq t, \bm \lambda(\Qb)\geq\bm 0\},\nonumber
\end{eqnarray}
and the maximum power constraint \cite{Wang2009}
\begin{eqnarray}
\mathcal{Q}_2=\{\Qb\in \Hmit_+^t:\max_i\{\lambda_{i}(\Qb)\}\leq P_{m}, \bm \lambda(\Qb)\geq\bm 0\}. \nonumber
\end{eqnarray}
\section{Optimal Transmit Covariance Design}
\subsection{Main Result}
The capacity of the compound MIMO Gaussian channel \eqref{eq:channel}-\eqref{eq:channel_con} and \eqref{eq:power-constraint} is \cite[Theorem 7.1]{Gamalbook2011}
\begin{eqnarray}\label{eq:maxmin}
C_{\max\min}\triangleq\max_{\Qb\in\mathcal{Q}} \min_{\interleave\Hb-\Hb_0\interleave_2\leq \varepsilon} I(\Qb,\Hb),
\end{eqnarray}
where $I(\Qb,\Hb)=I(\xb;\yb)$ is the mutual information of the channel \eqref{eq:channel}, i.e., \cite{Telatar1999}
$$I(\Qb,\Hb)=\log\Det\left(\Ib_r + \gamma\Hb\Qb\Hb^H\right),$$
and $\gamma$ is the per-antenna SNR. {Finding an efficient solution} of the max-min problem \eqref{eq:maxmin} has been open for a long time (except {in} some special cases \cite{Wang2009,Wiesel2007,Loyka2012}), because $I(\Qb,\Hb)$ is nonconvex with respect to $\Hb$.
\ifreport
{However, we} will show that the problem \eqref{eq:maxmin} possesses a hidden convexity property when \eqref{eq:channel_con} holds, and thus can be simplified {into} a convex optimization problem.
\else
\fi

Suppose that the singular value decomposition (SVD) of the nominal channel $\Hb_0$ is given by
\begin{eqnarray}\label{eq:SVD_H0}
\Hb_0=\Ub_0 \bm\Sigma_{\Hb_0} \Vb_0^H,
\end{eqnarray}
where $\Ub_0\in\Cplx^{r}$ and $\Vb_0\in\Cplx^{t}$ are unitary matrices.
The first key result of this paper is stated as follows:

\begin{Theorem}\label{thm1}
If $\mathcal{Q}$ and $\mathcal{H}$ are nonempty sets, $\mathcal {H}$ is defined in \eqref{eq:channel_con}, and $\mathcal {Q}$ satisfies the unitarily invariant properties \eqref{eq:power-region_property} and \eqref{eq:power-region_property1}, then
\begin{eqnarray}\label{eq:solution}
\Qb^\star = \Vb_0 \bm\Lambda_\Qb^\star \Vb_0^H,~\Hb^\star = \Ub_0 \bm\Sigma_\Hb^\star \Vb_0^H,
\end{eqnarray}
is a solution to Problem \eqref{eq:maxmin},
where $\Ub_0$ and $\Vb_0$ are defined in \eqref{eq:SVD_H0}, the diagonal matrices $\bm\Lambda_\Qb^\star$ and $\bm\Sigma_\Hb^\star$ are determined by $\bm\Sigma_\Hb^\star = \textrm{diag} (\bm \sigma^\star)$ and
$\bm\Lambda_\Qb^\star = \textrm{diag} (\bm \lambda^\star)$, such that $(\bm \sigma^\star,\bm \lambda^\star)$ solves the problem
\begin{eqnarray}\label{eq:vec_opt_problem}
C_{\max\min}=\max_{\substack{\bm \lambda\in B_\mathcal {Q}\\\bm \lambda\geq\bm0}}\min_{\substack{\|\bm \sigma-\bm \sigma_0\|_\infty\leq\varepsilon\\\bm \sigma\geq\bm0}}\sum_{i=1}^{\min\{t,r\}}\log (1+\gamma\sigma_i^2\lambda_i),
\end{eqnarray}
with the convex set $B_\mathcal {Q}$ defined in \eqref{eq:convex_set}.
\end{Theorem}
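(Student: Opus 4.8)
The plan is to prove Theorem~\ref{thm1} by sandwiching the value of the max-min problem \eqref{eq:maxmin} between the value of the scalar program \eqref{eq:vec_opt_problem} from above and below, and exhibiting the diagonal pair $(\Qb^\star,\Hb^\star)$ in \eqref{eq:solution} as an attaining optimizer. The first step is to remove the arbitrary unitary bases of $\Hb_0$. Setting $\tilde\Hb=\Ub_0^H\Hb\Vb_0$ and $\tilde\Qb=\Vb_0^H\Qb\Vb_0$, the identity $\Det(\Ib_r+\gamma\Hb\Qb\Hb^H)=\Det(\Ib_r+\gamma\tilde\Hb\tilde\Qb\tilde\Hb^H)$ leaves $I(\Qb,\Hb)$ unchanged, the constraint becomes $\interleave\tilde\Hb-\bm\Sigma_{\Hb_0}\interleave_2\le\varepsilon$ by the unitary invariance \eqref{eq:channel_con_property}, and $\tilde\Qb\in\mathcal{Q}$ by \eqref{eq:power-region_property}. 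Hence there is no loss of generality in assuming $\Hb_0=\bm\Sigma_{\Hb_0}$ is diagonal, and it suffices to show that \eqref{eq:maxmin} equals the value of \eqref{eq:vec_opt_problem} and is attained at the diagonal matrices $\bm\Lambda_\Qb^\star$ and $\bm\Sigma_\Hb^\star$, where $\sigma_i^\star=\max\{0,\sigma_{0,i}-\varepsilon\}$ and $\bm\lambda^\star$ maximizes $\sum_i\log(1+\gamma(\sigma_i^\star)^2\lambda_i)$ over $\bm\lambda\in B_\mathcal{Q}$.

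For the upper bound $C_{\max\min}\le$ (value of \eqref{eq:vec_opt_problem}), I would bound the inner minimum, for \emph{every} $\Qb\in\mathcal{Q}$, by evaluating $I(\Qb,\Hb)$ at the single feasible diagonal channel $\Hb'=\bm\Sigma'$ with entries $\sigma_i'=\max\{0,\sigma_{0,i}-\varepsilon\}$, which satisfies $\|\bm\sigma'-\bm\sigma_0\|_\infty\le\varepsilon$. Since $\bm\Sigma'$ is diagonal, Hadamard's inequality gives $\log\Det(\Ib_r+\gamma\bm\Sigma'\Qb\bm\Sigma'^{H})\le\sum_i\log(1+\gamma(\sigma_i')^2 q_{ii})$, where $q_{ii}$ are the diagonal entries of $\Qb$. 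By property \eqref{eq:power-region_property1}, $\Db(\Qb)\in\mathcal{Q}$, so the multiset $\{q_{ii}\}$ is a feasible eigenvalue vector in $B_\mathcal{Q}$; the elementary rearrangement inequality (same-order pairing maximizes $\prod_i(1+\gamma a_ib_i)$) then lets me replace $\{q_{ii}\}$ by its descending sort $\bm\lambda$, giving $\min_{\Hb}I(\Qb,\Hb)\le\max_{\bm\lambda\in B_\mathcal{Q}}\sum_i\log(1+\gamma(\sigma_i')^2\lambda_i)$. Taking the max over $\Qb$ yields the bound. This direction is the classical diagonalization argument of \cite{Telatar1999} adapted to a unitarily invariant $\mathcal{Q}$ and needs only standard tools.

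The crux is the matching lower bound $C_{\max\min}\ge$ (value of \eqref{eq:vec_opt_problem}), obtained by fixing $\Qb^\star=\bm\Lambda_\Qb^\star$ and proving $\min_{\Hb}I(\Qb^\star,\Hb)\ge\sum_i\log(1+\gamma(\sigma_i^\star)^2\lambda_i^\star)$, with the diagonal $\Hb^\star=\bm\Sigma_\Hb^\star$ attaining it. Here $I(\Qb^\star,\Hb)$ is nonconvex in $\Hb$, and this is the hard part: were $\Hb$ free to rotate, an adversary could \emph{anti-align} its singular vectors with the eigenvectors of $\Qb^\star$ and push $\Det(\Ib_r+\gamma\Hb\Qb^\star\Hb^H)$ strictly below the same-order product appearing in \eqref{eq:vec_opt_problem}, so the desired aligned lower bound is \emph{not} free — it holds only because the spectral-ball constraint forbids such a rotation. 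This is exactly where the new determinant inequality, Lemma~\ref{lem1}, enters: I would use it, together with Weyl's singular-value perturbation bound $\sigma_i(\Hb)\ge\sigma_i(\Hb_0)-\interleave\Hb-\Hb_0\interleave_2\ge\sigma_{0,i}-\varepsilon$, to lower-bound $\Det(\Ib_r+\gamma\Hb\Qb^\star\Hb^H)$ by the aligned product $\prod_i(1+\gamma\sigma_i^2(\Hb)\lambda_i^\star)$ under the feasibility constraint, and then use $\sigma_i(\Hb)\ge\sigma_i^\star$ to bound each factor below by $1+\gamma(\sigma_i^\star)^2\lambda_i^\star$, with equality precisely at $\Hb^\star$.

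Establishing Lemma~\ref{lem1} in a form strong enough to convert the constrained nonconvex inner minimization into the scalar minimization over $\bm\sigma$ in \eqref{eq:vec_opt_problem} — i.e., certifying that shrinking the nominal singular values while preserving their singular vectors is globally worst-case — is the main obstacle, and it is the step that genuinely requires new input rather than standard matrix inequalities. Once both bounds are in hand they coincide with the value of \eqref{eq:vec_opt_problem}, the upper bound is met by $\Qb^\star$ and the lower bound by $\Hb^\star$, and undoing the unitary change of basis via $\Qb^\star=\Vb_0\bm\Lambda_\Qb^\star\Vb_0^H$ and $\Hb^\star=\Ub_0\bm\Sigma_\Hb^\star\Vb_0^H$ completes the proof.
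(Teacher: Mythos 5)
Your proposal is correct and follows essentially the same route as the paper's Appendix~\ref{app1} proof: a unitary change of basis, an upper bound via Hadamard's inequality together with property \eqref{eq:power-region_property1}, and a matching lower bound obtained by restricting $\Qb$ to the $\Vb_0$-aligned diagonal family and invoking Lemma~\ref{lem1} to reduce the nonconvex inner minimization to the scalar problem \eqref{eq:vec_opt_problem}. The only cosmetic difference is that you bound the inner minimum by evaluating at the single channel $\bm\Sigma'$ with entries $\max\{0,\sigma_{0,i}-\varepsilon\}$ (plus a rearrangement step), whereas the paper restricts the inner minimization to the whole aligned family $\Hb=\Ub_0\bm\Sigma_\Hb\Vb_0^H$; both devices are valid and yield the same value.
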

\begin{proof}
The proof of Theorem \ref{thm1} relies on the unitarily invariant properties { \eqref{eq:spectral_norm},} \eqref{eq:channel_con_property}, \eqref{eq:power-region_property}, and \eqref{eq:power-region_property1}, and a new matrix determinant inequality presented in Lemma~\ref{lem1} given below. {The details of the proof} are provided in Appendix \ref{app1}.
\end{proof}
{The following lemma is a {key} technical contribution of this paper, which plays {an important} role in proving Theorem \ref{thm1}.}

\begin{lemma}[Matrix Determinant Inequality]\label{lem1}
If $\bm\Sigma$ and $\bm\Lambda$ are diagonal matrices with nonnegative diagonal entries, then one solution to
\begin{eqnarray}\label{eq:lem1}
\min_{\interleave{\bm\Delta}\interleave_2\leq \varepsilon} \Det\left[\Ib\! + \! (\bm\Sigma\!+\!{\bm\Delta})\bm\Lambda
(\bm\Sigma\!+\!{\bm\Delta})^H\right]
\end{eqnarray}
is a diagonal matrix.
\end{lemma}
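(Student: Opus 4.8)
The plan is to turn the determinant into a sum of independently minimizable nonnegative terms, bound each term by a sharp singular-value perturbation estimate, and then exhibit a single explicit \emph{diagonal} perturbation that attains all of these minima at once. Throughout write $\Mb=\bm\Sigma+\bm\Delta$ and factor $\Mb\bm\Lambda\Mb^H=(\Mb\bm\Lambda^{1/2})(\Mb\bm\Lambda^{1/2})^H$. Using Sylvester's identity together with the fact that $\Det(\Ib+\Ab)$ equals the sum of all principal minors of $\Ab$ (the elementary symmetric functions of its eigenvalues), I would first expand
\[
\Det[\Ib+\Mb\bm\Lambda\Mb^H]=\sum_{S\subseteq\{1,\ldots,t\}}\left(\prod_{i\in S}\lambda_i\right)\Det(\Mb_{:,S}^H\Mb_{:,S}),
\]
where $\Mb_{:,S}$ denotes the submatrix formed by the columns of $\Mb$ indexed by $S$. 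Since every $\lambda_i\ge0$ and each $\Det(\Mb_{:,S}^H\Mb_{:,S})$ is a Gram determinant, every summand is nonnegative, so it suffices to lower bound each Gram determinant $\Det(\Mb_{:,S}^H\Mb_{:,S})=\prod_j\sigma_j(\Mb_{:,S})^2$ uniformly over the feasible set.

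For the per-subset estimate I would invoke Weyl's inequality for singular values. Because $\bm\Sigma$ is diagonal with nonnegative entries, the columns of $\bm\Sigma_{:,S}$ are orthogonal, so the singular values of $\bm\Sigma_{:,S}$ are precisely $\{\sigma_i\}_{i\in S}$; moreover deleting columns cannot increase the spectral norm, whence $\interleave\bm\Delta_{:,S}\interleave_2\le\interleave\bm\Delta\interleave_2\le\varepsilon$. Weyl's inequality then gives $\sigma_j(\Mb_{:,S})\ge\sigma_j(\bm\Sigma_{:,S})-\interleave\bm\Delta_{:,S}\interleave_2\ge\max(\sigma_{(j)}-\varepsilon,0)$, where $\sigma_{(j)}$ is the $j$-th largest of $\{\sigma_i\}_{i\in S}$. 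Multiplying over $j$ yields the clean termwise bound
\[
\Det(\Mb_{:,S}^H\Mb_{:,S})\ge\prod_{i\in S}\max(\sigma_i-\varepsilon,0)^2,
\]
and substituting this back produces a lower bound on the whole determinant that depends only on $\bm\Sigma$, $\bm\Lambda$, and $\varepsilon$.

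To close the argument I would verify that this lower bound is attained at the diagonal feasible point $\bm\Delta^\star=\textrm{diag}(-\min(\sigma_i,\varepsilon))$, for which $\bm\Sigma+\bm\Delta^\star=\textrm{diag}(\max(\sigma_i-\varepsilon,0))$ is diagonal and $\interleave\bm\Delta^\star\interleave_2=\max_i\min(\sigma_i,\varepsilon)\le\varepsilon$. Running the same principal-minor expansion for this diagonal matrix reproduces exactly $\sum_S(\prod_{i\in S}\lambda_i)\prod_{i\in S}\max(\sigma_i-\varepsilon,0)^2$, so $\bm\Delta^\star$ meets the lower bound and is therefore a minimizer, which is the assertion of the lemma.

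The main obstacle is conceptual rather than computational: recognizing the principal-minor expansion that \emph{decouples} the determinant into terms each governed by a single column set, since this is what reduces a nonconvex matrix problem to a family of scalar-type volume minimizations. After that the Weyl estimate is routine, but two points need care: handling the rectangular case $r\ne t$ (where the surplus $\sigma_i$ vanish and the corresponding columns of $\bm\Sigma$ are zero, so the bound degenerates to $0$ consistently), and confirming that the single diagonal choice $\bm\Delta^\star$ makes the Weyl bound \emph{simultaneously tight} for every subset $S$, which is what upgrades a collection of lower bounds into an exact minimizer.
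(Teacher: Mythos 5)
Your proof is correct, and it reaches the paper's bound through the same two essential ingredients --- the singular value perturbation inequality $\sigma_j(\bm\Sigma+\bm\Delta)\geq\sigma_j(\bm\Sigma)-\interleave\bm\Delta\interleave_2$ applied subset by subset, and the explicit diagonal minimizer with entries $\max\{\varsigma_{ii}-\varepsilon,0\}$ --- but your bookkeeping is genuinely cleaner in two places. First, the paper indexes its minors by deleting both rows and columns of $\bm\Sigma+\bm\Delta$, which forces an extra step (part 3 of its Lemma 2): since the principal submatrix of the Gram matrix is not the Gram matrix of the doubly-deleted submatrix, the paper writes $\Ab(S)=[\bm\Sigma(S)+\bm\Delta(S)]^H[\bm\Sigma(S)+\bm\Delta(S)]+\Nb^H\Nb$ via permutation matrices and then invokes determinant monotonicity on the positive semidefinite order. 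By keeping all rows and deleting only columns, you get the exact identity $(\Mb^H\Mb)(S)=\Mb_{:,S}^H\Mb_{:,S}$, and that step disappears. Second, the paper assembles the final inequality $\Det[\Ib+(\bm\Sigma+\bm\Delta)^H(\bm\Sigma+\bm\Delta)\bm\Lambda]\geq\prod_j(1+\max\{\varsigma_{jj}-\varepsilon,0\}^2\lambda_j)$ by an induction on cofactor expansions (its Lemma 3), which is in effect a hand-rolled derivation of the identity you quote directly, $\Det(\Ib+\Ab)=\sum_S\Det(\Ab(S))$; your route makes it transparent why the single point $\bm\Delta^\star$ is optimal, namely that it minimizes every nonnegative summand simultaneously. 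The two points you flag as needing care --- the rank-deficient case $|S|>r$, where both the Gram determinant and the lower bound degenerate to zero because some $\varsigma_{ii}$ with $i>r$ must lie in $S$, and the simultaneous tightness of the perturbation bound at $\bm\Delta^\star$ --- are exactly the right ones, and both check out.
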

\ifreport
\begin{proof}
See Appendix \ref{app3}.
\end{proof}
\else
The proof of Lemma~\ref{lem1} is provided in \cite{report_MIMO_compound2013}.
\fi

Theorem \ref{thm1} implies that the optimal transmit covariance of the MIMO Gaussian channel with worst case channel uncertainty is to diagonalize the nominal channel $\Hb_0$. Such a solution structure was previously known only for some special cases, such as high SNR limit ($\gamma\gg 1$), low SNR limit ($\gamma\ll 1$), low rank nominal channels (rank$(\Hb_0)\leq 2$) \cite{Wiesel2007,Loyka2012,Wang2009}. {In contrast,} Theorem \ref{thm1} holds for general nominal channels and all SNR values.
{Further, by} Theorem \ref{thm1}, the problem \eqref{eq:maxmin} reduces to \eqref{eq:vec_opt_problem} with much fewer variables.

\subsection{The Dual Problem}
Now, we consider the dual of the max-min problem \eqref{eq:maxmin}, {which is given by the following  \emph{min-max channel capacity}} problem
\begin{eqnarray}\label{eq:minmax1}
C_{\min\max}\triangleq \min_{\Hb\in \mathcal{H}}\max_{\Qb\in\mathcal{Q}} I(\Qb,\Hb).
\end{eqnarray}
It is important to distinguish the capacity of the compound channel $C_{\max\min}$ and the {min-max channel capacity} $C_{\min\max}$: $C_{\max\min}$ can be achieved for any channel $\Hb$ within $\mathcal{H}$, by using the same transmit covariance matrix $\Qb$.\footnote{The outer optimization of $\Qb$ in \eqref{eq:maxmin} is done without knowledge of $\Hb$.} $C_{\min\max}$ is the minimal capacity of the channels with $\Hb\in \mathcal{H}$, evaluating which requires
knowledge of $\Hb$ at the transmitter to obtain $\Qb$.\footnote{The inner optimization of $\Qb$ in \eqref{eq:minmax1} is done with knowledge of $\Hb$.}
%
We study the min-max problem \eqref{eq:minmax1} to gain more insight into the max-min problem \eqref{eq:maxmin}.
We consider a more general channel uncertainty region
\begin{eqnarray}\label{eq:channel_con1}
\mathcal{H}\triangleq\{\Hb:\interleave\Hb-\Hb_0\interleave\leq \varepsilon\},
\end{eqnarray}
where $\interleave\cdot\interleave$ is a unitarily invariant matrix norm satisfying \eqref{eq:channel_con_property}.
For any unitarily invariant matrix norm $\interleave\cdot\interleave$, there is a vector norm $\|\cdot\|$ such that
\begin{eqnarray}\label{eq:vec_norm}
\interleave\Ab \interleave=\|\bm \sigma(\Ab)\|
\end{eqnarray}
holds for all $\Ab\in \Cplx^{m\times n}$ \cite[Theorem 3.5.18]{Hornbook1991}. {For the special case of spectral norm, the associated vector norm in \eqref{eq:vec_norm} is $\|\cdot\|_\infty$, as given by \eqref{eq:spectral_norm}.
We have the following result:}
\begin{Theorem}\label{thm2}
If $\mathcal{Q}$ and $\mathcal{H}$ are nonempty sets, $\mathcal {H}$ is defined in \eqref{eq:channel_con1}, and $\mathcal {Q}$ satisfies the unitarily invariant properties \eqref{eq:power-region_property} and \eqref{eq:power-region_property1}, then
\begin{eqnarray}\label{eq:thm2_1}
\Qb'= \Vb_0 \bm\Lambda_\Qb' \Vb_0^H,~\Hb' = \Ub_0 \bm\Sigma_\Hb' \Vb_0^H,
\end{eqnarray}
is a solution to Problem \eqref{eq:minmax1},
where $\Ub_0$ and $\Vb_0$ are defined in \eqref{eq:SVD_H0}, the diagonal matrices $\bm\Lambda_\Qb'$ and $\bm\Sigma_\Hb'$ are determined by $\bm\Sigma_\Hb' = \textrm{diag} (\bm \sigma')$ and
$\bm\Lambda_\Qb' = \textrm{diag} (\bm \lambda')$ such that $(\bm \sigma',\bm \lambda')$ solves the problem
\begin{eqnarray}\label{eq:vec_opt_problem1}
C_{\min\max}=\min_{\substack{\|\bm \sigma-\bm \sigma_0\|\leq\varepsilon\\\bm \sigma\geq\bm0}}\max_{\substack{\bm \lambda\in B_\mathcal {Q}\\\bm \lambda\geq\bm0}}\sum_{i=1}^{\min\{t,r\}}\log (1+\gamma\sigma_i^2\lambda_i),
\end{eqnarray}
with the vector norm $\|\cdot\|$ and the convex set $B_\mathcal {Q}$ defined in \eqref{eq:vec_norm} and \eqref{eq:convex_set}, respectively.
\end{Theorem}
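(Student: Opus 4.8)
The plan is to reduce the matrix min-max problem \eqref{eq:minmax1} to the scalar problem \eqref{eq:vec_opt_problem1} by treating the two nested optimizations in turn. The key structural difference from Theorem~\ref{thm1} is worth emphasizing at the outset: the inner problem of \eqref{eq:maxmin} is a \emph{non-convex} minimization over $\Hb$ and therefore needs the determinant inequality of Lemma~\ref{lem1}, whereas the inner problem of \eqref{eq:minmax1} is the \emph{concave} maximization $\max_{\Qb\in\mathcal{Q}}I(\Qb,\Hb)$, which is classical. The difficulty therefore migrates to the outer minimization over $\Hb$, where I would invoke a singular-value perturbation bound instead of Lemma~\ref{lem1}.

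First I would resolve the inner maximization. Fixing $\Hb$ with SVD $\Hb=\Ub_\Hb\bm\Sigma_\Hb\Vb_\Hb^H$ and substituting $\tilde\Qb=\Vb_\Hb^H\Qb\Vb_\Hb$ (which leaves the feasible set unchanged by \eqref{eq:power-region_property}), we get $I(\Qb,\Hb)=\log\Det(\Ib_r+\gamma\bm\Sigma_\Hb\tilde\Qb\bm\Sigma_\Hb^H)$. Hadamard's inequality bounds this determinant by $\prod_i\big(1+\gamma\sigma_i^2(\Hb)\,[\tilde\Qb]_{ii}\big)$, and this bound is attained by the diagonal matrix $\Db(\tilde\Qb)\in\mathcal{Q}$ (property \eqref{eq:power-region_property1}); hence an optimal $\Qb$ is diagonal in the basis $\Vb_\Hb$. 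Using that $B_\mathcal{Q}$ is permutation-invariant because $\mathcal{Q}$ is unitarily invariant, this yields
\begin{eqnarray}
\max_{\Qb\in\mathcal{Q}}I(\Qb,\Hb)=\max_{\bm\lambda\in B_\mathcal{Q},\,\bm\lambda\geq\bm0}\sum_{i=1}^{\min\{t,r\}}\log\big(1+\gamma\sigma_i^2(\Hb)\lambda_i\big)\triangleq g(\bm\sigma(\Hb)).\nonumber
\end{eqnarray}
Thus the inner value depends on $\Hb$ only through its singular values, and $C_{\min\max}=\min_{\Hb\in\mathcal{H}}g(\bm\sigma(\Hb))$.

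The remaining, and key, step is the outer minimization, and this is where I expect the main obstacle. Here I would use Mirsky's singular-value perturbation theorem \cite{Hornbook1991}: for the symmetric gauge function $\|\cdot\|$ associated with $\interleave\cdot\interleave$ via \eqref{eq:vec_norm}, one has $\|\bm\sigma(\Hb)-\bm\sigma(\Hb_0)\|\leq\interleave\Hb-\Hb_0\interleave$. Consequently every feasible $\Hb\in\mathcal{H}$ produces a $\bm\sigma(\Hb)$ feasible for the scalar constraint $\|\bm\sigma-\bm\sigma_0\|\leq\varepsilon$, giving the lower bound $C_{\min\max}\geq\min_{\|\bm\sigma-\bm\sigma_0\|\leq\varepsilon,\,\bm\sigma\geq\bm0}g(\bm\sigma)$. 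For the matching upper bound I would take the scalar minimizer $\bm\sigma'$ and build $\Hb'=\Ub_0\bm\Sigma_\Hb'\Vb_0^H$ with $\bm\Sigma_\Hb'=\textrm{diag}(\bm\sigma')$; by the unitary invariance \eqref{eq:channel_con_property} and the sign/permutation invariance of the symmetric gauge, $\interleave\Hb'-\Hb_0\interleave=\|\bm\sigma'-\bm\sigma_0\|\leq\varepsilon$, so $\Hb'$ is feasible and attains $g(\bm\sigma')$; its inner-maximizer $\Qb'$, being diagonal in the right-singular-vector basis $\Vb_0$ of $\Hb'$, is exactly the $\Qb'$ of \eqref{eq:thm2_1}. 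Combining the two bounds yields $C_{\min\max}=g(\bm\sigma')$, realized by the claimed pair $(\Qb',\Hb')$. The delicate points to verify carefully are the perturbation inequality itself and the ordering conventions (the sorting of $\bm\sigma'$ and the permutation-symmetry of both $g$ and $B_\mathcal{Q}$), so that the diagonal construction $\Hb'$ genuinely realizes the scalar optimizer while remaining feasible.
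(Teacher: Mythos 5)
Your proposal is correct and follows essentially the same route as the paper's proof in Appendix B: Hadamard's inequality together with the unitary-invariance properties \eqref{eq:power-region_property}--\eqref{eq:power-region_property1} to resolve the inner maximization, the singular-value perturbation bound $\interleave\bm\Sigma_{\Hb}-\bm\Sigma_{\Hb_0}\interleave\leq\interleave\Hb-\Hb_0\interleave$ for the lower bound on $C_{\min\max}$, and the aligned construction $\Hb'=\Ub_0\bm\Sigma_\Hb'\Vb_0^H$ for the matching upper bound. The only difference is presentational (you resolve the inner maximum once into $g(\bm\sigma(\Hb))$ rather than writing two inequality chains), and your flagged care points about symmetric-gauge and permutation invariance are exactly the ones the paper relies on implicitly.
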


\begin{proof}
The proof of Theorem \ref{thm2} relies on the unitarily invariant properties { \eqref{eq:vec_norm},} \eqref{eq:channel_con_property}, \eqref{eq:power-region_property}, and \eqref{eq:power-region_property1}, but not the matrix determinant inequality in Lemma~\ref{lem1} for the spectral norm case. Therefore, Theorem \ref{thm2} holds for any unitarily invariant matrix norm.
The details of the proof are provided in Appendix \ref{app2}.
\end{proof}
{Note that a special case of Theorem \ref{thm2} was obtained in Theorem 3 of \cite{Loyka2012}, where $\interleave\cdot\interleave$ is limited to the spectral norm $\interleave\cdot\interleave_2$ and $\mathcal{Q}$ is the sum power constraint $\mathcal{Q}_1$.}

\subsection{Duality Gap is Zero}

It is interesting to see that the max-min problem \eqref{eq:maxmin} and the min-max problem \eqref{eq:minmax1} have similar solution structures, as given in \eqref{eq:solution} and \eqref{eq:thm2_1}, and the difference is only in the solutions to \eqref{eq:vec_opt_problem} and \eqref{eq:vec_opt_problem1}.
Next, we study whether \eqref{eq:vec_opt_problem} and \eqref{eq:vec_opt_problem1} have a common solution for the spectral norm case.

It is known that the following weak duality relation is always true: \cite{Gamalbook2011}
\begin{eqnarray}\label{eq:dual-gap}
C_{\max\min}\leq C_{\min\max}.
\end{eqnarray}
Moreover, equality holds in \eqref{eq:dual-gap}, i.e.,
\begin{eqnarray}\label{eq:saddle0}
C_{\max\min}=C_{\min\max},
\end{eqnarray}
if and only if \eqref{eq:vec_opt_problem} and \eqref{eq:vec_opt_problem1} have a common solution \cite[Corollary 9.16]{Zeidlerbook1986}.
It was conjectured in \cite{Loyka2012} that \eqref{eq:saddle0} holds for the case that $\interleave\cdot\interleave=\interleave\cdot\interleave_2$ and the power constraint is $\mathcal{Q}=\mathcal{Q}_1$. 
Here, using {Theorems \ref{thm1}, \ref{thm2}, and von Neumann's minimax theorem}, we can now prove this conjecture:

\begin{Theorem}\label{thm3}
If the conditions of Theorem \ref{thm1} are satisfied, then:
\begin{itemize}
\item[1)] The strong duality relation \eqref{eq:saddle0} holds.

\item[2)] Problems \eqref{eq:vec_opt_problem} and \eqref{eq:vec_opt_problem1} have a common solution $(\bm \sigma^*,\bm \lambda^*)$, where $\bm \sigma^*$ is given by
\begin{eqnarray}\label{eq:channel_solution}
\sigma_{i}^*=\max\{\sigma_{0,i}-\varepsilon,0\},
\end{eqnarray}
\end{itemize}
and $\bm \lambda^*$ is determined by the convex optimization problem
\begin{eqnarray}\label{eq:power_solution}
C_{\max\min}
=\max_{\substack{\bm \lambda\in B_\mathcal {Q}\\\bm \lambda\geq\bm0}}\!\!\!\!\sum_{i=1}^{\min\{t,r\}}\!\!\!\!\log (1\!+\!\gamma\max\{\sigma_{0,i}\!-\!\varepsilon,0\}^2\lambda_i).
\end{eqnarray}
\end{Theorem}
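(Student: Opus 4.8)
The plan is to prove Theorem~\ref{thm3} by exhibiting an explicit saddle point of the reduced scalar objective and then invoking the saddle-point characterization already recorded after \eqref{eq:dual-gap}. By Theorems~\ref{thm1} and~\ref{thm2}, specialized to the spectral norm so that the vector norm in \eqref{eq:vec_norm} is $\|\cdot\|_\infty$, the matrix problems \eqref{eq:maxmin} and \eqref{eq:minmax1} collapse to the scalar max--min \eqref{eq:vec_opt_problem} and min--max \eqref{eq:vec_opt_problem1} of the common objective $f(\bm\sigma,\bm\lambda)\triangleq\sum_{i=1}^{\min\{t,r\}}\log(1+\gamma\sigma_i^2\lambda_i)$ over $\bm\sigma$ in the $\ell_\infty$ ball $\{\|\bm\sigma-\bm\sigma_0\|_\infty\le\varepsilon,\ \bm\sigma\ge\bm0\}$ and $\bm\lambda\in B_{\mathcal Q}$, $\bm\lambda\ge\bm0$. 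It therefore suffices to produce a pair $(\bm\sigma^*,\bm\lambda^*)$ obeying the two saddle inequalities $f(\bm\sigma^*,\bm\lambda)\le f(\bm\sigma^*,\bm\lambda^*)\le f(\bm\sigma,\bm\lambda^*)$ for all feasible $\bm\sigma,\bm\lambda$.

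First I would solve the inner minimization of \eqref{eq:vec_opt_problem}. The crucial observation is that the $\ell_\infty$ constraint decouples across coordinates into the box $\max\{\sigma_{0,i}-\varepsilon,0\}\le\sigma_i\le\sigma_{0,i}+\varepsilon$, and that $f$ is separable and, for every fixed $\bm\lambda\ge\bm0$, nondecreasing in each $\sigma_i\ge0$. Hence for any $\bm\lambda$ the inner minimum is attained coordinatewise at the lower endpoint, giving the minimizer $\sigma_i^*=\max\{\sigma_{0,i}-\varepsilon,0\}$ of \eqref{eq:channel_solution}, which is \emph{independent} of $\bm\lambda$. Substituting $\bm\sigma^*$ reduces \eqref{eq:vec_opt_problem} to \eqref{eq:power_solution}; since $f(\bm\sigma^*,\cdot)$ is a sum of concave functions of the $\lambda_i$ and $B_{\mathcal Q}$ is convex, this is a concave maximization over a convex set, and I take $\bm\lambda^*$ to be one of its maximizers (which exists by compactness of $B_{\mathcal Q}$).

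Next I would verify the two saddle inequalities. The left inequality $f(\bm\sigma^*,\bm\lambda)\le f(\bm\sigma^*,\bm\lambda^*)$ is exactly the optimality of $\bm\lambda^*$ in \eqref{eq:power_solution}. For the right inequality, the same monotonicity argument shows that $\bm\sigma^*$ minimizes $f(\cdot,\bm\lambda^*)$ over the box, so $f(\bm\sigma^*,\bm\lambda^*)\le f(\bm\sigma,\bm\lambda^*)$ for all feasible $\bm\sigma$. Thus $(\bm\sigma^*,\bm\lambda^*)$ is a saddle point, and by the saddle-point theorem recorded after \eqref{eq:dual-gap} (equivalently, von Neumann's minimax theorem) this yields the strong duality \eqref{eq:saddle0} together with the fact that $(\bm\sigma^*,\bm\lambda^*)$ solves both \eqref{eq:vec_opt_problem} and \eqref{eq:vec_opt_problem1}. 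Equivalently, one may argue directly that $\phi(\bm\sigma)\triangleq\max_{\bm\lambda}f(\bm\sigma,\bm\lambda)$ is componentwise nondecreasing, being a pointwise maximum of componentwise-nondecreasing functions, so the outer minimization in \eqref{eq:vec_opt_problem1} is also attained at $\bm\sigma^*$, whence $C_{\min\max}=\phi(\bm\sigma^*)=f(\bm\sigma^*,\bm\lambda^*)=C_{\max\min}$.

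The main obstacle, and the only place where the spectral norm is essential, is establishing that a single $\bm\sigma^*$ is simultaneously optimal for the inner minimization of the max--min problem and the outer minimization of the min--max problem. This hinges entirely on the $\ell_\infty$ geometry: the ball $\{\|\bm\sigma-\bm\sigma_0\|_\infty\le\varepsilon\}$ is a coordinate box, so the componentwise-monotone objective can be minimized coordinatewise irrespective of the order of the $\min$ and $\max$. For a general unitarily invariant norm the constraint \eqref{eq:channel_con1} couples the $\sigma_i$, the worst-case $\bm\sigma$ then depends on $\bm\lambda$, and the clean decoupling—hence the explicit closed form \eqref{eq:channel_solution}—no longer holds; weak duality \eqref{eq:dual-gap} still guarantees $C_{\max\min}\le C_{\min\max}$, but the matching reverse inequality is exactly what the box structure buys us here.
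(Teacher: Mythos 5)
Your proof is correct, and it reaches the conclusion by a slightly more elementary route than the paper. The paper first establishes part 1) on its own: it substitutes $x_i=\log\sigma_i$ to turn the reduced problem into one whose objective is concave in $\bm\lambda$ and convex in $\bm x$, and then invokes von Neumann's minimax theorem to guarantee the \emph{existence} of a saddle point, which already yields \eqref{eq:saddle0}; only afterwards does it verify that $(\bm x^*,\bm\lambda^*)$ with $x_i^*=\log(\max\{\sigma_{0,i}-\varepsilon,0\})$ is a saddle point, which is part 2). You instead exhibit the saddle point $(\bm\sigma^*,\bm\lambda^*)$ directly in the original variables---using exactly the two facts that matter, namely that the $\ell_\infty$ ball intersected with $\bm\sigma\ge\bm 0$ is the coordinate box $[\max\{\sigma_{0,i}-\varepsilon,0\},\,\sigma_{0,i}+\varepsilon]$ and that the separable objective is nondecreasing in each $\sigma_i$---and then read off strong duality from the elementary direction of the saddle-point characterization ($f(\bm\sigma^*,\bm\lambda)\le f(\bm\sigma^*,\bm\lambda^*)\le f(\bm\sigma,\bm\lambda^*)$ sandwiches $C_{\min\max}\le f(\bm\sigma^*,\bm\lambda^*)\le C_{\max\min}$ against weak duality \eqref{eq:dual-gap}). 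This buys two things: you never need the existence half of von Neumann's theorem (hence no convexification step), and you sidestep the minor awkwardness of $\log(\max\{\sigma_{0,i}-\varepsilon,0\})=-\infty$ when $\sigma_{0,i}\le\varepsilon$, which the paper's reparametrization glosses over. What the paper's route buys in exchange is that part 1) is proved independently of the explicit formula \eqref{eq:channel_solution}, so strong duality would survive even if the closed-form minimizer were not available. Your closing observation that the worst-case $\bm\sigma$ is independent of $\bm\lambda$ precisely because of the box geometry, and that this is where the spectral norm is essential, correctly identifies why Theorem \ref{thm3} does not extend verbatim to other unitarily invariant norms.
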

\begin{proof}1) Problem \eqref{eq:vec_opt_problem} can be expressed as
\begin{eqnarray}\label{eq:1}
C_{\min\max}=\max_{\substack{\bm \lambda\in B_\mathcal {Q}\\\bm \lambda\geq\bm0}}\min_{\substack{\bm \sigma\geq\bm0}}\!\!\!\!\!\!\!\!&&\sum_{i=1}^{\min\{t,r\}}\!\!\log (1+\gamma\sigma_i^2\lambda_i)\nonumber\\
\textrm{s.t.}~\!\!\!\!\!\!\!\! &&\max\{\sigma_{0,i}\!-\!\varepsilon,0\}\leq \sigma_i\leq\sigma_{0,i}\!+\!\varepsilon,~\forall~i.\nonumber
\end{eqnarray}
By introducing $x_i\triangleq\log(\sigma_i)$, this problem can be reformulated as the following convex optimization problem:
\begin{eqnarray}\label{eq1}
\max_{\substack{\bm \lambda\in B_\mathcal {Q}\\\bm \lambda\geq\bm0}}\min_{\substack{\bm x}}\!\!\!\!\!\!\!\!&&\sum_{i=1}^{\min\{t,r\}}\log [1+\gamma e^{2x_i}\lambda_i]\\
\textrm{s.t.}~\!\!\!\!\!\!\!\! &&\log(\max\{\sigma_{0,i}-\varepsilon,0\})\leq x_i\leq\log(\sigma_{0,i}+\varepsilon),~\forall~i,\nonumber
\end{eqnarray}
where the objective function is concave in $\bm \lambda$ and convex in $\bm x$ \cite{Boyd04}. {Similarly, \eqref{eq:vec_opt_problem1} can be reformulated as the following convex optimization problem:
\begin{eqnarray}\label{eq2}
\min_{\substack{\bm x}}\!\!\!\!\!\!\!\!&&\max_{\substack{\bm \lambda\in B_\mathcal {Q}\\\bm \lambda\geq\bm0}}\sum_{i=1}^{\min\{t,r\}}\log [1+\gamma e^{2x_i}\lambda_i]\\
\textrm{s.t.}~\!\!\!\!\!\!\!\! &&\log(\max\{\sigma_{0,i}-\varepsilon,0\})\leq x_i\leq\log(\sigma_{0,i}+\varepsilon),~\forall~i.
\nonumber
\end{eqnarray}
Let us use $f(\bm \lambda,\bm x)$ to denote the objective function in \eqref{eq1} and \eqref{eq2}. We say a point $(\bm \lambda_0,\bm x_0)$ is a \emph{saddle point} of $f$ if
\begin{eqnarray}\label{eq3}
f(\bm \lambda_0,\bm x_0)\!=\!\min_{\substack{\sigma_{0,i}-\varepsilon\leq e^{x_i}
\\e^{x_i}\leq\sigma_{0,i}+\varepsilon}} f(\bm \lambda_0,\bm x)= \max_{\substack{\bm \lambda\in B_\mathcal {Q}\\\bm \lambda\geq\bm0}}f(\bm \lambda,\bm x_0).
\end{eqnarray}
From von Neumann's minimax theorem \cite[Theorem 9.D]{Zeidlerbook1986}, we have:
(1) the function $f$ has a saddle point; (2) the point $(\bm \lambda_0,\bm x_0)$ is a saddle point of $f$ \emph{if and only if} $(\bm \lambda_0,\bm x_0)$ is a common solution to \eqref{eq1} and \eqref{eq2}. Therefore, \eqref{eq1} and \eqref{eq2} have the same optimal value, and \eqref{eq:saddle0} follows.

2) Let us define $\bm x^*$ as $x_{i}^*=\log (\max\{\sigma_{0,i}-\varepsilon,0\}).$
By von Neumann's minimax theorem, we only need to show that $(\bm x^*,\bm \lambda^*)$ is a saddle point of $f$.
When $(\bm \lambda_0,\bm x_0)$ is replaced by $(\bm x^*,\bm \lambda^*)$, the minimization problem in \eqref{eq3} can be separated into several subproblems, i.e.,
\begin{eqnarray}
\min_{\bm x}\!\!\!\!\!\!\!\!&&\log (1+\gamma e^{2x_i}\lambda_i^*)\nonumber\\
\textrm{s.t.}~\!\!\!\!\!\!\!\! &&\log(\max\{\sigma_{0,i}-\varepsilon,0\})\leq x_i\leq\log(\sigma_{0,i}+\varepsilon),~\forall~i,\nonumber
\end{eqnarray}
and the solution is given by $\bm x^*$. On the other hand, according to \eqref{eq:power_solution}, $\bm \lambda^*$ is the solution to the maximization problem in \eqref{eq3}. Therefore, \eqref{eq3} holds at $(\bm x^*,\bm \lambda^*)$, and thus $(\bm x^*,\bm \lambda^*)$ is a saddle point of $f$. }
\end{proof}
We note that the conjecture of \cite{Loyka2012} is a special case of Theorem \ref{thm3} where $\mathcal{Q}$ is restricted to be the sum power constraint $\mathcal{Q}_1$.
By Theorem \ref{thm1} and \ref{thm3}, we have shown that the covariance design problem \eqref{eq:maxmin} is a convex optimization problem in nature, if the channel uncertainty region $\mathcal {H}$ is characterized by the spectral norm.

\ifreport
{\section{Discussion}\label{sec:discussion}
{It is known that any matrix norm can be bounded within {a constant multiple} of the spectral norm: for any matrix norm $\interleave\cdot\interleave$, there exist $\alpha_l,\alpha_h>0$ such that
\begin{eqnarray}
\alpha_l\interleave\Hb-\Hb_0\interleave_2\leq \interleave\Hb-\Hb_0\interleave\leq \alpha_h\interleave\Hb-\Hb_0\interleave_2,
\end{eqnarray}
where $\alpha_l$ and $\alpha_h$ are independent of the channel error $\Hb-\Hb_0$ but depend on $\interleave\cdot\interleave$ \cite{Hornbook1985}. The values of $\alpha_l$ and $\alpha_h$ are summarized in Table I of \cite{Loyka2012} for some popular matrix norms. Using this and our results in Theorem 1 and 3, one can derive capacity lower and upper bounds for compound MIMO Gaussian channels with channel uncertainty region described by different matrix norms.}

Another interesting open problem is whether the matrix determinant inequality in Lemma~\ref{lem1} holds for some other unitarily invariant matrix norms, e.g., the Frobenius (Euclidean) norm
\begin{eqnarray}\label{eq:F_norm}
\interleave\Ab\interleave_F\triangleq\sqrt{\sum_{i,j} |a_{ij}|^2}=\|\bm\sigma(\Ab)\|_2.
\end{eqnarray}
If one can generalize Lemma~\ref{lem1} to another unitarily invariant matrix norm, then Theorems \ref{thm1} and \ref{thm3} also hold for this particular matrix norm, with the infinite norm $\|\cdot\|_\infty$ in \eqref{eq:vec_opt_problem} replaced by the corresponding vector norm defined by \eqref{eq:vec_norm}.

We have found a counterexample, {which shows} that Lemma~\ref{lem1} does not hold for the matrix norm $\interleave\cdot\interleave = \|\bm\sigma(\cdot)\|_1$. Consider the matrices
\begin{eqnarray}\label{eq:example2}
\bm\Sigma = \left[
          \begin{array}{cc}
            2 & 0 \\
            0 & 1 \\
          \end{array}
        \right],
        \bm\Lambda = \left[
          \begin{array}{cc}
            4 & 0 \\
            0 & 3 \\
          \end{array}
        \right].
\end{eqnarray}
If $\bm\Delta$ is restricted as a diagonal matrix, one can numerically calculate that the optimal objective value of the problem
\begin{eqnarray}\label{eq:example1}
\min_{\|\bm\sigma(\bm\Delta)\|_1\leq 1} \Det\left[\Ib\! + \! (\bm\Sigma\!+\!{\bm\Delta})\bm\Lambda
(\bm\Sigma\!+\!{\bm\Delta})^H\right]
\end{eqnarray}
is 15.63. However, when
\begin{eqnarray}
\bm\Delta = \left[
          \begin{array}{cc}
            -0.5 & -0.5 \\
            -0.5 & -0.5 \\
          \end{array}
        \right],
\end{eqnarray}
the value of the objective function in \eqref{eq:example1} is 15.5. Therefore, the solution to \eqref{eq:example2} and \eqref{eq:example1} is not necessarily a diagonal matrix. For other unitarily invariant matrix norms, we still do not know whether Lemma~\ref{lem1} holds or not.}
\else
\fi

\section{Conclusion}
{In this paper, we have investigated the capacity of a compound MIMO channel with an additive uncertainty of bounded spectral norm, and derived the optimal transmit covariance matrix in close-form.} When the channel uncertainty region is characterized by the spectral norm, we have revealed a hidden convexity property in this problem. We have proved that the optimal transmit covariance design is to diagonalize the nominal channel matrix and there is zero duality gap between the capacity of the compound MIMO Gaussian channel and the min-max channel capacity. 

\appendices
\section{Proof of Theorem \ref{thm1}}\label{app1}

First, we construct an upper bound of $C_{\max\min}$ by imposing {an} extra constraint in the inner minimization problem:
\begin{eqnarray}\label{eq:tmp_problem1}
&&\!\!\!\!\!\!\!\!\!\!\!~~~C_{\max\min}\nonumber\\
&&\!\!\!\!\!\!\!\!\!\!\!\overset{(a)}{\leq} \max_{\Qb\in\mathcal{Q}} \min_{\substack{\interleave\Hb-\Hb_0\interleave_2\leq \varepsilon\\
\Hb = \Ub_0 \bm\Sigma_\Hb \Vb_0^H}} \log\Det\left(\Ib_r + \gamma\Hb\Qb\Hb^H\right)\nonumber\\
&&\!\!\!\!\!\!\!\!\!\!\!\overset{(b)}{=}\max_{\Qb\in\mathcal{Q}} \min_{\substack{\interleave\Hb-\Hb_0\interleave_2\leq \varepsilon\\\Hb = \Ub_0 \bm\Sigma_\Hb \Vb_0^H}} \log\Det\left(\Ib_r + \gamma\bm\Sigma_\Hb\Vb_0^H\Qb\Vb_0\bm\Sigma_\Hb\right),\nonumber
\end{eqnarray}
where $\Ub_0$ and $\Vb_0$ are defined in \eqref{eq:SVD_H0}, step $(a)$ is due to the additional constraint in the inner minimization problem, and step $(b)$ is due to $\Hb = \Ub_0 \bm\Sigma_\Hb \Vb_0^H$ and $\Det(\Ib+\Ab\Bb)=\Det(\Ib+\Bb\Ab)$. Let us define $\tilde{\Qb}\triangleq \Vb_0^H\Qb\Vb_0$ and use $\Db(\tilde{\Qb})$ to denote the diagonal matrix that has the same diagonal elements {as $\tilde{\Qb}$. We then attain}
\begin{eqnarray}\label{eq:high}
&&\!\!\!\!\!\!\!\!\!\!\!~~~C_{\max\min}\nonumber\\
&&\!\!\!\!\!\!\!\!\!\!\!\overset{}{\leq}\max_{\Qb\in\mathcal{Q}} \min_{\substack{\interleave\Hb-\Hb_0\interleave_2\leq \varepsilon\\\Hb = \Ub_0 \bm\Sigma_\Hb \Vb_0^H}} \log\Det\left(\Ib_r + \gamma\bm\Sigma_\Hb\tilde{\Qb}\bm\Sigma_\Hb\right)\nonumber\\
&&\!\!\!\!\!\!\!\!\!\!\!\overset{(a)}{=} \max_{\tilde{\Qb}\in\mathcal{Q}} \min_{\substack{\interleave\Hb-\Hb_0\interleave_2\leq \varepsilon\\\Hb = \Ub_0 \bm\Sigma_\Hb \Vb_0^H}} \log\Det\left(\Ib_r + \gamma\bm\Sigma_\Hb \tilde{\Qb}\bm\Sigma_\Hb\right)\nonumber\\
&&\!\!\!\!\!\!\!\!\!\!\!\overset{(b)}{=} \max_{\tilde{\Qb}\in\mathcal{Q}} \min_{\interleave\bm\Sigma_\Hb -\bm\Sigma_{\Hb_0}\interleave_2\leq \varepsilon} \log\Det\left(\Ib_r + \gamma\bm\Sigma_\Hb \tilde{\Qb}\bm\Sigma_\Hb\right)\nonumber\\
&&\!\!\!\!\!\!\!\!\!\!\!\overset{(c)}{\leq} \max_{\tilde{\Qb}\in \mathcal{Q}} \min_{\interleave\bm\Sigma_\Hb-\bm\Sigma_{\Hb_0}\interleave_2\leq \varepsilon} \log\Det\left(\Ib_r + \gamma\bm\Sigma_\Hb \Db(\tilde{\Qb})\bm\Sigma_\Hb\right)\nonumber\\
&&\!\!\!\!\!\!\!\!\!\!\!\overset{(d)}{\leq} \max_{\Db(\tilde{\Qb})\in \mathcal{Q}} \min_{\interleave\bm\Sigma_\Hb-\bm\Sigma_{\Hb_0}\interleave_2\leq \varepsilon} \!\log\Det\left(\Ib_r + \gamma\bm\Sigma_\Hb \Db(\tilde{\Qb})\bm\Sigma_\Hb\right)\!\!\nonumber\\
&&\!\!\!\!\!\!\!\!\!\!\!\overset{(e)}{=} \max_{\substack{\bm \lambda\in B_\mathcal {Q}\\\bm \lambda\geq\bm0}}\min_{\substack{\|\bm \sigma-\bm \sigma_0\|_\infty\leq\varepsilon\\\bm \sigma\geq\bm0}}\sum_{i=1}^{\min\{t,r\}}\log (1+\gamma\sigma_i^2\lambda_i),
\end{eqnarray}
where step $(a)$ is due to \eqref{eq:power-region_property}, step $(b)$ is due to $\interleave\bm\Sigma_\Hb -\bm\Sigma_{\Hb_0}\interleave_2=\interleave\Hb-\Hb_0\interleave_2$ which is derived from $\Hb_0 = \Ub_0 \bm\Sigma_{\Hb_0} \Vb_0^H$, $\Hb = \Ub_0 \bm\Sigma_\Hb \Vb_0^H$ and \eqref{eq:channel_con_property}, step $(c)$ is due to the Hadamard inequality $\Det(\Ab)\leq \prod_{i} \Ab_{ii}$, step $(d)$ is because the feasible region $\Db(\tilde{\Qb})\in \mathcal{Q}$ is larger than the region $\tilde{\Qb}\in \mathcal{Q}$ according to \eqref{eq:power-region_property1}, and step $(e)$ is due to \eqref{eq:spectral_norm} and \eqref{eq:convex_set} with $\lambda_i$ representing the diagonal entries of $\Db(\tilde{\Qb})$.

Next, we build a lower bound of $C_{\max\min}$ by considering one extra constraint in the outer maximization problem:
\begin{eqnarray}\label{eq:tmp_problem2}
&&\!\!\!\!\!\!\!\!\!\!\!~~~C_{\max\min}\!\!\!\!\!\!\!\!\!\!\!\nonumber\\
&&\!\!\!\!\!\!\!\!\!\!\!\overset{(a)}{\geq} \max_{\substack{\Qb\in\mathcal{Q}\\\Qb=\Vb_0 \bm\Lambda_\Qb\Vb_0^H}} \min_{\Hb\in\mathcal{H}} I(\Qb,\Hb)\nonumber\\
&&\!\!\!\!\!\!\!\!\!\!\!=\!\max_{\substack{\Qb\in\mathcal{Q}\\\Qb=\Vb_0 \bm\Lambda_\Qb\Vb_0^H}}\! \min_{\interleave\bm\Delta\interleave_2\leq \varepsilon}\! \log\Det\left[\Ib_r\! + \! \gamma(\Hb_0\!+\!\bm\Delta)\Qb(\Hb_0\!+\!\bm\Delta)^H\right]\nonumber\\
&&\!\!\!\!\!\!\!\!\!\!\!\overset{(b)}{=}\!\max_{\substack{\bm\Lambda_\Qb\in\mathcal{Q}}}\! \min_{\interleave\tilde{\bm\Delta}\interleave_2\leq \varepsilon}\! \log\Det\left[\Ib_r\! + \! \gamma(\bm\Sigma_{\Hb_0}\!+\!\tilde{\bm\Delta})\bm\Lambda_\Qb
(\bm\Sigma_{\Hb_0}\!+\!\tilde{\bm\Delta})^H\right],\!\!\!\nonumber\\
\end{eqnarray}
where $\tilde{\bm\Delta}\triangleq\Ub_0^H\bm\Delta \Vb_0$, step $(a)$ is due the additional constraint in the outer maximization, and step $(b)$ is due to $\Hb_0 = \Ub_0 \bm\Sigma_{\Hb_0} \Vb_0^H$, $\Qb=\Vb_0 \bm\Lambda_\Qb\Vb_0^H$, the definition $\tilde{\bm\Delta}\triangleq\Ub_0^H\bm\Delta \Vb_0$, and the unitarily invariant properties \eqref{eq:channel_con_property} and \eqref{eq:power-region_property}. 

According to Lemma~\ref{lem1}, the optimal $\tilde{\bm\Delta}$ is a diagonal matrix. Hence, $\bm\Sigma_{\Hb}'\triangleq\bm\Sigma_{\Hb_0}\!+\!\tilde{\bm\Delta}$ in \eqref{eq:tmp_problem2} is also a diagonal matrix. Substituting this into \eqref{eq:tmp_problem2}, we have
\begin{eqnarray}\label{eq:low}
C_{\max\min}\!\!\!\!\!\!\!\!\!\!\!&&\geq\max_{\substack{\bm\Lambda_\Qb\in\mathcal{Q}}} \min_{\interleave \bm\Sigma_{\Hb}'-\bm\Sigma_{\Hb_0}\interleave_2\leq \varepsilon} \log\Det\left[\Ib_r\! + \! \gamma\bm\Sigma_{\Hb}'\bm\Lambda_\Qb
\bm\Sigma_{\Hb}'\right]\nonumber\\
&&=\max_{\substack{\bm \lambda\in B_\mathcal {Q}\\\bm \lambda\geq\bm0}}\min_{\substack{\|\bm \sigma-\bm \sigma_0\|_\infty\leq\varepsilon\\\bm \sigma\geq\bm0}}\sum_{i=1}^{\min\{t,r\}}\log (1+\gamma\sigma_i^2\lambda_i),
\end{eqnarray}
where the last step is due to \eqref{eq:spectral_norm} and \eqref{eq:convex_set} with $\sigma_i$ representing the diagonal entries of $\bm\Sigma_{\Hb}'$.
Using \eqref{eq:high} and \eqref{eq:low}, the optimal objective value of \eqref{eq:maxmin} is given by \eqref{eq:vec_opt_problem}.

Finally, we show \eqref{eq:solution} is an optimal solution to \eqref{eq:maxmin}. For this, we substitute the solution \eqref{eq:solution} into \eqref{eq:maxmin}, i.e.,
\begin{eqnarray}
&&\!\!\!\!\!\!\!\!\!\!\!~~~\max_{\substack{\Qb\in\mathcal{Q}\\\Qb=\Vb_0 \bm\Lambda_\Qb^\star\Vb_0^H}} \min_{\substack{\interleave\Hb-\Hb_0\interleave_2\leq \varepsilon\\\Hb = \Ub_0 \bm\Sigma_\Hb^\star \Vb_0^H}} \log\Det\left(\Ib_r + \gamma\Hb\Qb\Hb^H\right)\nonumber\\
&&\!\!\!\!\!\!\!\!\!\!\!\overset{}{=}\max_{\substack{\Qb\in\mathcal{Q}\\\Qb=\Vb_0 \bm\Lambda_\Qb^\star\Vb_0^H}} \min_{\substack{\interleave\Hb-\Hb_0\interleave_2\leq \varepsilon\\\Hb = \Ub_0 \bm\Sigma_\Hb^\star \Vb_0^H}} \log\Det\left(\Ib_r\! + \! \gamma\bm\Sigma_{\Hb}^\star\bm\Lambda_\Qb^\star
\bm\Sigma_{\Hb}^\star\right)  \nonumber\\
&&\!\!\!\!\!\!\!\!\!\!\!\overset{(a)}{=}\max_{\substack{\bm\Lambda_\Qb^\star\in\mathcal{Q}}} \min_{\interleave \bm\Sigma_{\Hb}^\star-\bm\Sigma_{\Hb_0}\interleave_2\leq \varepsilon} \log\Det\left(\Ib_r\! + \! \gamma\bm\Sigma_{\Hb}^\star\bm\Lambda_\Qb^\star
\bm\Sigma_{\Hb}^\star\right)  \nonumber\\
&&\!\!\!\!\!\!\!\!\!\!\!\overset{(b)}{=}\max_{\substack{\bm \lambda\in B_\mathcal {Q}\\\bm \lambda\geq\bm0}}\min_{\substack{\|\bm \sigma-\bm \sigma_0\|_\infty\leq\varepsilon\\\bm \sigma\geq\bm0}}\sum_{i=1}^{\min\{t,r\}}\log (1+\gamma\sigma_i^2\lambda_i)\nonumber\\
&&\!\!\!\!\!\!\!\!\!\!\!=C_{\max\min}, \label{eq:2}
\end{eqnarray}
where step $(a)$ is due to \eqref{eq:channel_con_property} and \eqref{eq:power-region_property}, step $(b)$ is due to \eqref{eq:convex_set} and \eqref{eq:vec_norm}. \hfill$\blacksquare$
\section{Proof of Theorem \ref{thm2}}\label{app2}

Consider the following upper bound of $C_{\min\max}$:
\begin{eqnarray}\label{eq:app2_1}
C_{\min\max}\!\!\!\!\!\!\!\!\!\!\!&&\leq \min_{\substack{\Hb\in \mathcal{H}\\\Hb = \Ub_0 \bm\Sigma_\Hb \Vb_0^H}}\max_{\Qb\in\mathcal{Q}} \log\Det\left(\Ib_r\! +\! \gamma\Hb\Qb\Hb^H\right)\nonumber\\
&&\overset{(a)}{=}\min_{\substack{\Hb\in \mathcal{H}\\\Hb = \Ub_0 \bm\Sigma_\Hb \Vb_0^H}}\max_{\substack{\Qb\in\mathcal{Q}\\\Qb=\Vb_0 \bm\Lambda_\Qb\Vb_0^H}} \! \log\Det\left(\Ib_r\! +\! \gamma\Hb\Qb\Hb^H\right)\nonumber\\
&&\overset{(b)}{=}\min_{\substack{\|\bm \sigma-\bm \sigma_0\|\leq\varepsilon\\\bm \sigma\geq\bm0}}\max_{\substack{\bm \lambda\in B_\mathcal {Q}\\\bm \lambda\geq\bm0}}\sum_{i=1}^{\min\{t,r\}}\log (1+\gamma\sigma_i^2\lambda_i),
\end{eqnarray}
where step $(a)$ is due to that the optimal power allocation result is of the form $\Qb=\Vb_0 \bm\Lambda_\Qb\Vb_0^H$ by using \eqref{eq:power-region_property}, \eqref{eq:power-region_property1}, and the Hadamard inequality \cite{Telatar1999}, step $(b)$ is derived by using \eqref{eq:channel_con_property}, \eqref{eq:power-region_property}, \eqref{eq:convex_set}, and \eqref{eq:vec_norm} as in \eqref{eq:2}.

Then, we construct a lower bound of $C_{\min\max}$:
\begin{eqnarray}\label{eq:app2_2}
C_{\min\max}\!\!\!\!\!\!\!\!\!\!\!&&= \min_{\interleave\Hb-\Hb_0\interleave\leq \varepsilon}\max_{\Qb\in\mathcal{Q}} \log\Det\left(\Ib_r\! +\! \gamma\Hb\Qb\Hb^H\right)\nonumber\\
&&\overset{(a)}{=}\min_{\interleave\Hb-\Hb_0\interleave\leq \varepsilon}\max_{\Qb\in\mathcal{Q}} \log\Det\left(\Ib_r\! +\! \gamma\bm\Sigma_{\Hb}\bm\Lambda_\Qb
\bm\Sigma_{\Hb}\right)
\nonumber\\
&&\overset{(b)}{=}\min_{\interleave\Hb-\Hb_0\interleave\leq \varepsilon}\max_{\Lambda_\Qb\in\mathcal{Q}} \log\Det\left(\Ib_r\! +\! \gamma\bm\Sigma_{\Hb}\bm\Lambda_\Qb
\bm\Sigma_{\Hb}\right)
\nonumber\\
&&\overset{(c)}{\geq}\min_{\interleave\bm\Sigma_{\Hb}-\bm\Sigma_{\Hb_0}\interleave\leq \varepsilon}\max_{\Lambda_\Qb\in\mathcal{Q}} \log\Det\left(\Ib_r\! +\! \gamma\bm\Sigma_{\Hb}\bm\Lambda_\Qb
\bm\Sigma_{\Hb}\right)
\nonumber\\
&&\overset{(d)}{=}\min_{\substack{\|\bm \sigma-\bm \sigma_0\|\leq\varepsilon\\\bm \sigma\geq\bm0}}\max_{\substack{\bm \lambda\in B_\mathcal {Q}\\\bm \lambda\geq\bm0}}\sum_{i=1}^{\min\{t,r\}}\log (1+\gamma\sigma_i^2\lambda_i),
\end{eqnarray}
where step $(a)$ is due to the optimal power allocation result by using \eqref{eq:power-region_property}, \eqref{eq:power-region_property1}, and the Hadamard inequality \cite{Telatar1999},
step $(b)$ is due to \eqref{eq:power-region_property}, step $(c)$ is due to the following result for unitarily invariant matrix norm: \cite[Theorem 7.4.51]{Hornbook1985} \cite[Eq. (3.5.30)]{Hornbook1991}
$$\interleave\bm\Sigma_{\Hb}-\bm\Sigma_{\Hb_0}\interleave\leq \interleave{\Hb}-{\Hb_0}\interleave,$$
with $\bm\Sigma_{\Hb}$ and $\bm\Sigma_{\Hb_0}$ being the diagonal matrices in the SVDs of $\Hb$ and ${\Hb_0}$, and step $(d)$ is due to \eqref{eq:convex_set} and \eqref{eq:vec_norm}. Then, \eqref{eq:vec_opt_problem1} follows from \eqref{eq:app2_1} and \eqref{eq:app2_2}. \eqref{eq:thm2_1} can be proved similarly to \eqref{eq:solution}. Hence, the theorem is proven. \hfill$\blacksquare$

\ifreport
\section{Proof of Lemma~\ref{lem1}}\label{app3}
In order to prove Lemma~\ref{lem1}, we first need to show the following result:
\begin{lemma}\label{lem:1}
Let $\bm\Sigma=[\varsigma_{ij}]\in\Real^{r\times t}$ be a diagonal matrix with non-negative diagonal entries $\varsigma_{ii}\geq0$ for $i=1,\cdots,\min\{r,t\}$, and $\Deltab\in\Cplx^{r\times t}$ be a matrix satisfying $\interleave{\bm\Delta}\interleave_2\leq \varepsilon$.
\begin{itemize}
\item[1)] The following inequality holds:
\begin{eqnarray}\label{eq:lem1-1}
\Det\left[(\bm\Sigma + \Deltab)^H(\bm\Sigma + \Deltab)\right]\geq\prod_{j=1}^{t}\max\{\varsigma_{jj}-\varepsilon,0\}^2,
\end{eqnarray}
where $\varsigma_{jj}$ are defined by $\varsigma_{jj}=0$ for $\min\{t,r\}< j\leq t$.

\item[2)] Let $S$ be a proper subset of $\{1,2,\cdots,t\}$, then
\begin{eqnarray}\label{eq:lem1-2}
&&\!\!\!\!\!\!\!\!\!\!\!~~~\Det\left\{\left[\bm\Sigma(S) + \Deltab(S)\right]^H\left[\bm\Sigma(S) + \Deltab(S)\right]\right\}\nonumber\\
&&\!\!\!\!\!\!\!\!\!\!\!\geq\prod_{j\in S }\max\{\varsigma_{jj}-\varepsilon,0\}^2,~~~ \forall S\subseteq \{1,2,\cdots,t\},
\end{eqnarray}
where $\Xb(S)$ denotes the submatrix of $\Xb$ obtained by deleting the rows and columns complementary to those indicated by $S$ from $\Xb$.

\item[3)] Let $\Ab=(\bm\Sigma + \Deltab)^H(\bm\Sigma + \Deltab)$, then:
\begin{eqnarray}\label{eq:lem1-3}
\Det\left[\Ab(S)\right]\geq\prod_{j\in S }\max\{\varsigma_{jj}-\varepsilon,0\}^2,\nonumber\\
~~~\forall S\subseteq \{1,2,\cdots,t\}.
\end{eqnarray}
\end{itemize}
\end{lemma}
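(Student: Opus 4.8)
The plan is to reduce every determinant in the statement to a product of squared singular values and then control those singular values through a perturbation bound driven by the spectral-norm constraint $\interleave\Deltab\interleave_2\le\varepsilon$. The single tool doing the real work is Weyl's perturbation inequality for singular values \cite{Hornbook1985}, which gives $\sigma_k(\bm\Sigma+\Deltab)\ge\sigma_k(\bm\Sigma)-\interleave\Deltab\interleave_2\ge\sigma_k(\bm\Sigma)-\varepsilon$ for every $k$, and hence $\sigma_k(\bm\Sigma+\Deltab)\ge\max\{\sigma_k(\bm\Sigma)-\varepsilon,0\}$ since singular values are nonnegative. For part~1) I would write $\Det[(\bm\Sigma+\Deltab)^H(\bm\Sigma+\Deltab)]=\prod_{k=1}^{t}\sigma_k(\bm\Sigma+\Deltab)^2$, padding with zero singular values when $r<t$ (in which case the left side vanishes and so does the right side). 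Because $\bm\Sigma$ is diagonal, its singular values are exactly its diagonal entries, so the multiset $\{\sigma_k(\bm\Sigma)\}_{k=1}^{t}$ coincides with $\{\varsigma_{jj}\}_{j=1}^{t}$ under the convention $\varsigma_{jj}=0$ for $\min\{r,t\}<j\le t$. Inserting the Weyl bound into the product, and using that a product is insensitive to the ordering of its factors, yields $\prod_k\sigma_k(\bm\Sigma+\Deltab)^2\ge\prod_k\max\{\sigma_k(\bm\Sigma)-\varepsilon,0\}^2=\prod_{j=1}^{t}\max\{\varsigma_{jj}-\varepsilon,0\}^2$, which is exactly \eqref{eq:lem1-1}.

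Part~2) is just part~1) applied to submatrices. For $S\subseteq\{1,\dots,t\}$ the principal submatrix $\bm\Sigma(S)$ of the diagonal matrix $\bm\Sigma$ is again diagonal, with entries $\{\varsigma_{jj}:j\in S\}$, and the corresponding $\Deltab(S)$ still satisfies $\interleave\Deltab(S)\interleave_2\le\interleave\Deltab\interleave_2\le\varepsilon$: any submatrix can be written as $\Pb\,\Deltab\,\Qb^H$ with $\Pb,\Qb$ having orthonormal rows, so passing to a submatrix never increases the spectral norm. Thus the hypotheses of part~1) hold for the pair $(\bm\Sigma(S),\Deltab(S))$, and \eqref{eq:lem1-1} applied to them is precisely \eqref{eq:lem1-2}.

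Part~3) follows from part~2) by a positive-semidefinite (Loewner) monotonicity argument. Writing $M=\bm\Sigma+\Deltab$ and $\Ab=M^HM$, the principal submatrix $\Ab(S)$ equals $[M(:,S)]^H M(:,S)$, i.e.\ the Gram matrix of the columns of $M$ indexed by $S$. Decomposing this Gram matrix as a sum of rank-one terms over the $r$ rows of $M(:,S)$ and discarding those rows whose index lies outside $S$ can only decrease it in the Loewner order, so $\Ab(S)\succeq[M(S,S)]^H M(S,S)=[\bm\Sigma(S)+\Deltab(S)]^H[\bm\Sigma(S)+\Deltab(S)]$. Since the determinant is monotone on positive semidefinite matrices, $\Det[\Ab(S)]\ge\Det\{[\bm\Sigma(S)+\Deltab(S)]^H[\bm\Sigma(S)+\Deltab(S)]\}$, and part~2) finishes the bound. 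When $S$ contains an index exceeding $\min\{r,t\}$ the right-hand product is $0$ and the inequality is trivial because $\Ab(S)$ is positive semidefinite.

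The main obstacle is not any single inequality—Weyl's bound, the spectral-norm monotonicity of submatrices, and Loewner monotonicity of the determinant are all standard—but rather the careful bookkeeping required to make these fit together in the rectangular case: matching the $t$ (possibly zero-padded) singular values of $\bm\Sigma$ with the diagonal entries $\varsigma_{jj}$, keeping the constraint $\interleave\cdot\interleave_2\le\varepsilon$ intact after restricting to a submatrix, and correctly identifying the principal submatrix $\Ab(S)$ of the Gram matrix with the Gram matrix of a column submatrix so that part~2) can legitimately be invoked.
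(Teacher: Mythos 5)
Your proposal is correct and follows essentially the same route as the paper: Weyl's singular-value perturbation bound for part 1), the submatrix spectral-norm inequality for part 2), and the observation that $\Ab(S)$ dominates $[\bm\Sigma(S)+\Deltab(S)]^H[\bm\Sigma(S)+\Deltab(S)]$ in the Loewner order (the paper phrases this via permutation matrices and a block factorization, you via discarding rows of the Gram sum, but it is the same argument) combined with determinant monotonicity for part 3). No gaps; your explicit handling of the zero-padding and of indices exceeding $\min\{r,t\}$ is, if anything, slightly more careful than the paper's.
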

\begin{proof}
1) It is known that for any $\Ab,\Bb\in\Cplx^{r\times t}$ the following singular value inequality holds \cite[Eq. (5.12.15)]{Mayerbook2000}, \cite[Corollary 7.3.8]{Hornbook1985}
\begin{eqnarray}
\left|\sigma_i(\Ab+\Bb)-\sigma_i(\Ab)\right|\leq \interleave\Bb\interleave_2,~~~\forall i=1,\cdots,\min\{t,r\}.\nonumber
\end{eqnarray}

By this, we have
\begin{eqnarray}\label{eq:singularbound}
&&\!\!\!\!\!\!\!\!~~~\sigma_i(\bm\Sigma + \Deltab)\nonumber\\
&&\!\!\!\!\!\!\!\!\geq \max\left\{\sigma_i(\bm\Sigma )-\interleave{\bm\Delta}\interleave_2,0\right\}\nonumber\\
&&\!\!\!\!\!\!\!\!
\geq\max\left\{\sigma_i(\bm\Sigma )-\varepsilon,0\right\},~~~\forall i=1,\cdots,\min\{t,r\},
\end{eqnarray}
where the maximization is due to the fact $\sigma_i(\bm\Sigma + \Deltab)\geq0$.
Moreover, $\sigma_i(\bm\Sigma + \Deltab)=0$ for $\min\{t,r\}<i\leq t$.

Since $\varsigma_{ii}\geq0$, the singular values of the diagonal matrix $\bm\Sigma$ are given by $\{\varsigma_{11},\varsigma_{22},\cdots,\varsigma_{pp},0,\cdots,0\}$. Let us define $\varsigma_{jj}$ as $\varsigma_{jj}=0$ for $\min\{t,r\}< j\leq t$.
Hence, we attain
\begin{eqnarray}
&&~~~\Det\left[(\bm\Sigma + \Deltab)^H(\bm\Sigma + \Deltab)\right]\nonumber\\
&&=\prod_{j=1}^t \sigma_j(\bm\Sigma + \Deltab)^2\nonumber\\
&&\geq \prod_{j=1}^t \max\left\{\sigma_j(\bm\Sigma )-\varepsilon,0\right\}^2\nonumber\\
&&=\prod_{j=1}^t\max\{\varsigma_{jj}-\varepsilon,0\}^2.\nonumber
\end{eqnarray}

2) Since $\bm\Sigma$ is a diagonal matrix, after deleting the rows and columns, the singular values of the submatrix $\bm\Sigma(S)$ are given by $\{\varsigma_{ii}:i\in S\}$. Moreover, after deleting some rows and columns, the spectral norm of $\Deltab(S)$ satisfies
$\interleave\Deltab(S)\interleave_2\leq\interleave{\bm\Delta}\interleave_2$ \cite[Thoerem 7.3.9]{Hornbook1985}.
Therefore
\begin{eqnarray}
&&~~~\Det\left\{\left[\bm\Sigma(S) + \Deltab(S)\right]^H\left[\bm\Sigma(S) + \Deltab(S)\right]\right\}\nonumber\\
&&=\prod_{j=1}^{|S|} \sigma_j[\bm\Sigma(S) + \Deltab(S)]^2\nonumber\\
&&\geq \prod_{j=1}^{|S|} \max\left\{\sigma_j(\bm\Sigma(S) )-\interleave\Deltab(S)\interleave_2,0\right\}^2\nonumber\\
&&\geq \prod_{j=1}^{|S|} \max\left\{\sigma_j(\bm\Sigma(S) )-\varepsilon,0\right\}^2\nonumber\\
&&=\prod_{j\in S}\max\{\varsigma_{jj}-\varepsilon,0\}^2.\nonumber
\end{eqnarray}

3) For any given $S\subseteq \{1,2,\cdots,t\}$, one can interchange the rows and columns of $\Ab$ and $(\bm\Sigma + \Deltab)$ by multiplying with
two permutation matrices $\Pb\in\Hmit_+^t$ and $\Qb\in\Hmit_+^r$ as
$\Bb=\Pb\Ab \Pb$ and $\Phib=\Qb(\bm\Sigma + \Deltab)\Pb$, such that
$\Ab(S)$ and $\bm\Sigma(S) + \Deltab(S)$ are the leading submatrices of $\Bb$ and $\Phib$, respectively, i.e.,
\begin{eqnarray}
&&\Bb = \left(
          \begin{array}{cc}
            \Ab(S) & \Cb \\
            \Cb^H & \Db \\
          \end{array}
        \right)
,~~\Phib = \left(
          \begin{array}{cc}
            \bm\Sigma(S) + \Deltab(S) & \Mb \\
            \Nb & \Gb \\
          \end{array}
        \right),\nonumber
\end{eqnarray}
Since $\Pb^H=\Pb$, $\Qb^H=\Qb$, $\Pb^2=\Ib$ and $\Qb^2=\Ib$, we attain
$$\Bb=\Pb(\bm\Sigma + \Deltab)^H\Qb\Qb(\bm\Sigma + \Deltab) \Pb=\Phib^H\Phib,$$
thereby
\begin{eqnarray}
\Ab(S) = \left(\bm\Sigma(S) + \Deltab(S)\right)^H\left(\bm\Sigma(S) + \Deltab(S)\right) + \Nb^H\Nb.\nonumber
\end{eqnarray}
Then,
\begin{eqnarray}
\Ab(S)\succeq \left(\bm\Sigma(S) + \Deltab(S)\right)^H\left(\bm\Sigma(S) + \Deltab(S)\right)\succeq\bm 0,\nonumber
\end{eqnarray}
which further implies that \cite[Corollary 7.7.4]{Hornbook1985}
\begin{eqnarray}
\Det\left[\Ab(S)\right]\geq \Det\left[\left(\bm\Sigma(S) + \Deltab(S)\right)^H\left(\bm\Sigma(S) + \Deltab(S)\right)\right].\nonumber
\end{eqnarray}
Finally, by using \eqref{eq:lem1-2}, the result in \eqref{eq:lem1-3} follows.
\end{proof}

Using part (3) of Lemma \ref{lem:1}, we can establish the following lemma:
\begin{lemma} \label{lem:key}
Let $\bm\Sigma=[\varsigma_{ij}]\in\Real^{r\times t}$ and $\Db=[d_{ij}]\in\Real^{t\times t}$ be two diagonal matrices with non-negative diagonal entries, and $\Deltab\in\Cplx^{r\times t}$ be a matrix satisfying $\interleave{\bm\Delta}\interleave_2\leq \varepsilon$. Then
\begin{eqnarray}\label{eq:lem:key1}
&&~~~\Det\left[\Ib_{t} + (\bm\Sigma + \Deltab)^H(\bm\Sigma + \Deltab)\Db\right]\nonumber\\
&&\geq\prod_{j=1}^{\min\{t,r\}}\left(1+\max\{\varsigma_{jj}-\varepsilon,0\}^2d_{jj}\right),
\end{eqnarray}
where equality is achieved by the diagonal matrix $\Deltab^\star$ with diagonal entries given by
\begin{eqnarray}
\Deltab^\star_{jj}=\varsigma_{jj}-\max\{\varsigma_{jj}\!-\!\varepsilon,0\},~j=1,\cdots,\min\{t,r\}.\nonumber
\end{eqnarray}
\end{lemma}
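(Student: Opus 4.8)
The plan is to collapse the $t\times t$ determinant into a sum over principal submatrices, to which part (3) of Lemma~\ref{lem:1} applies term by term. Write $\Ab=(\bm\Sigma+\Deltab)^H(\bm\Sigma+\Deltab)$, so that the quantity of interest is $\Det[\Ib_t+\Ab\Db]$. The algebraic fact I would invoke is the expansion of a determinant into a sum of principal minors: for any $t\times t$ matrix $\Mb$, $\Det[\Ib_t+\Mb]=\sum_{S\subseteq\{1,\dots,t\}}\Det[\Mb(S)]$, where $S=\emptyset$ contributes $1$. Because $\Db$ is diagonal, the principal submatrix factorizes as $(\Ab\Db)(S)=\Ab(S)\Db(S)$, so $\Det[(\Ab\Db)(S)]=\Det[\Ab(S)]\prod_{j\in S}d_{jj}$. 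This is the step that decouples the diagonal weights $d_{jj}$ from the structure of $\Ab$.

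Next I would apply Lemma~\ref{lem:1}. For each $S$, inequality \eqref{eq:lem1-3} gives $\Det[\Ab(S)]\geq\prod_{j\in S}\max\{\varsigma_{jj}-\varepsilon,0\}^2$, and since every $d_{jj}\geq0$ we may multiply by $\prod_{j\in S}d_{jj}$ without reversing the inequality. Hence each term is at least $\prod_{j\in S}\big(\max\{\varsigma_{jj}-\varepsilon,0\}^2 d_{jj}\big)$. Summing over all $S$ and using the elementary identity $\sum_{S\subseteq\{1,\dots,t\}}\prod_{j\in S}a_j=\prod_{j=1}^t(1+a_j)$ with $a_j=\max\{\varsigma_{jj}-\varepsilon,0\}^2 d_{jj}$ yields the bound $\Det[\Ib_t+\Ab\Db]\geq\prod_{j=1}^t(1+\max\{\varsigma_{jj}-\varepsilon,0\}^2 d_{jj})$. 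The factors for $\min\{t,r\}<j\leq t$ equal $1$ because $\varsigma_{jj}=0$ there forces $\max\{\varsigma_{jj}-\varepsilon,0\}=0$, so the product collapses to the range $1\leq j\leq\min\{t,r\}$ claimed in \eqref{eq:lem:key1}.

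For the equality claim I would exhibit the diagonal $\Deltab^\star$ that pushes each diagonal entry of $\bm\Sigma$ down to $\max\{\varsigma_{jj}-\varepsilon,0\}$, i.e., a diagonal perturbation of magnitude $\min\{\varsigma_{jj},\varepsilon\}\leq\varepsilon$ on the $j$-th entry. Since the spectral norm of a diagonal matrix equals its largest absolute diagonal entry, $\interleave\Deltab^\star\interleave_2\leq\varepsilon$, so $\Deltab^\star$ is feasible. For this choice $\bm\Sigma+\Deltab^\star$ is diagonal, hence $(\bm\Sigma+\Deltab^\star)^H(\bm\Sigma+\Deltab^\star)$ is diagonal with entries $\max\{\varsigma_{jj}-\varepsilon,0\}^2$, and therefore $\Ib_t+(\bm\Sigma+\Deltab^\star)^H(\bm\Sigma+\Deltab^\star)\Db$ is diagonal. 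Its determinant is the product of its diagonal entries, which is exactly the right-hand side of \eqref{eq:lem:key1}, so the bound is attained.

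The conceptual core lies entirely in the principal-minor expansion together with the factorization $(\Ab\Db)(S)=\Ab(S)\Db(S)$; once these are established, the remainder is bookkeeping. I expect the step requiring the most care to be the justification of the principal-minor identity and the verification that the termwise inequality survives multiplication, which relies on $\prod_{j\in S}d_{jj}\geq0$ (guaranteed by $d_{jj}\geq0$) and on part (3) of Lemma~\ref{lem:1} holding for \emph{every} $S\subseteq\{1,\dots,t\}$ simultaneously, including $S=\emptyset$. Everything else, including the equality analysis, reduces to direct computation with diagonal matrices.
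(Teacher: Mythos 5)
Your proof is correct and rests on the same ingredients as the paper's: part (3) of Lemma~\ref{lem:1} applied to every principal submatrix, the factorization $\Det[(\Ab\Db)(S)]=\Det[\Ab(S)]\prod_{j\in S}d_{jj}$ enabled by the diagonality of $\Db$, and direct verification that the diagonal perturbation attains the bound. The paper reaches the same sum over principal minors by an inductive cofactor expansion that adds one $\Eb_{jj}$ at a time, which is just an unrolled derivation of the identity $\Det[\Ib_t+\Mb]=\sum_{S}\Det[\Mb(S)]$ that you invoke directly, so your version is a cleaner packaging of essentially the same argument.
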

Then, Lemma~\ref{lem1} follows from \eqref{eq:lem:key1}.
\begin{proof}
We prove the inequality
\begin{eqnarray}\label{eq:lem:key}
&&~~~\Det\left[\Ib_{t} + (\bm\Sigma + \Deltab)^H(\bm\Sigma + \Deltab)\Db\right]\nonumber\\
&&\geq\prod_{j=1}^t\left(1+\max\{\varsigma_{jj}-\varepsilon,0\}^2d_{jj}\right)
\end{eqnarray}
by induction, where $\varsigma_{jj}$ for $\min\{t,r\}< j\leq t$ are defined as $\varsigma_{jj}=0$.

Let $\Bb=(\bm\Sigma + \Deltab)^H(\bm\Sigma + \Deltab)\Db$.
By part (3) of Lemma \ref{lem:1} and $\Db$ being a diagonal matrix, we attain
\begin{eqnarray}\label{eq:lem:key:proof0}
\Det\left[\Bb(S)\right]
\geq\prod_{j\in S }\left(\max\{\varsigma_{jj}\!-\!\varepsilon,0\}^2d_{jj}\right),\nonumber\\
~~~\forall S\subseteq \{1,2,\cdots,t\}.
\end{eqnarray}
For any set $S$ satisfying $\{1\}\subseteq S\subseteq\{1,2,\cdots,t\}$, by the cofactor (Laplace) expansion \cite[Eq. (6.2.5)]{Mayerbook2000} of $\Det[\Bb(S)]$ and \eqref{eq:lem:key:proof0}, we attain
\begin{eqnarray}\label{eq:lem:key:proof}
&&\!\!\!\!\!\!\!\!\!\!\!~~~\Det\left[\Eb_{11}+\Bb(S)\right]\nonumber\\
&&\!\!\!\!\!\!\!\!\!\!\!=\Det[\Bb(S)]+\Det\left[\Bb(S\setminus\{1\})\right]\nonumber\\
&&\!\!\!\!\!\!\!\!\!\!\!\geq\prod_{j\in S }\left(\max\{\varsigma_{jj}\!-\!\varepsilon,0\}^2d_{jj}\right)\!+\!\prod_{j\in S\setminus\{1\}}\!\left(\max\{\varsigma_{jj}\!-\!\varepsilon,0\}^2d_{jj}\right)\nonumber\\
&&\!\!\!\!\!\!\!\!\!\!\!= \left(1\!+\!\max\{\varsigma_{11}\!-\!\varepsilon,0\}^2d_{11}\right)\!\!\!\prod_{j\in S\setminus\{1\}}\!\!\!\!\left(\max\{\varsigma_{jj}\!-\!\varepsilon,0\}^2d_{jj}\right).
\end{eqnarray}

Similarly, for any set $S$ satisfying $\{1,2\}\subseteq S\subseteq\{1,2,\cdots,t\}$, by the cofactor expansion of $\Det[\Eb_{11}+\Bb(S)]$ and \eqref{eq:lem:key:proof}, we have
\begin{eqnarray}
&&\!\!\!\!\!\!\!\!\!\!\!~~~\Det\left[\Eb_{11}+\Eb_{22}+\Bb(S)\right]\nonumber\\
&&\!\!\!\!\!\!\!\!\!\!\!=\Det[\Eb_{11}+\Bb(S)]+\Det\left[\Eb_{11}+\Bb(S\setminus\{2\})\right]\nonumber\\
&&\!\!\!\!\!\!\!\!\!\!\!\geq\left(1\!+\!\max\{\varsigma_{11}\!-\!\varepsilon,0\}^2d_{11}\right)\!\!\!\prod_{j\in S\setminus\{1\}}\left(\max\{\varsigma_{jj}\!-\!\varepsilon,0\}^2d_{jj}\right)\nonumber\\
&&\!\!\!\!\!\!\!\!\!\!\!~~~+\left(1\!+\!\max\{\varsigma_{11}\!-\!\varepsilon,0\}^2d_{11}\right)\!\!\!\prod_{j\in S\setminus\{1,2\}}\!\!\!\left(\max\{\varsigma_{jj}\!-\!\varepsilon,0\}^2d_{jj}\right)\nonumber\\
&&\!\!\!\!\!\!\!\!\!\!\!= \prod_{j=1}^2\left(1\!+\!\max\{\varsigma_{jj}\!-\!\varepsilon,0\}^2d_{jj}\right)\!\!\!\prod_{j\in S\setminus\{1,2\}}\!\!\!\left(\max\{\varsigma_{jj}\!-\!\varepsilon,0\}^2d_{jj}\right).\nonumber
\end{eqnarray}

Suppose that for any set $S$ satisfying $\{1,2,\cdots,k\}\subseteq S\subseteq\{1,2,\cdots,t\}$, the following inequalities hold
\begin{eqnarray}\label{eq:lem:key:proof1}
&&~~~\Det\left[\Eb_{11}+\cdots+\Eb_{kk}+\Bb(S)\right]\nonumber\\
&&\geq \prod_{j=1}^k\left(1\!+\!\max\{\varsigma_{jj}\!-\!\varepsilon,0\}^2d_{jj}\right)\nonumber\\
&&~~~~\times\prod_{j\in S\setminus\{1,2,\cdots,k\}}\!\!\!\left(\max\{\varsigma_{jj}\!-\!\varepsilon,0\}^2d_{jj}\right).
\end{eqnarray}
Then, for any set $S$ satisfying $\{1,2,\cdots,k+1\}\subseteq S\subseteq\{1,2,\cdots,t\}$,
by the cofactor expansion of $\Det\left[\Eb_{11}+\cdots+\Eb_{kk}+\Bb(S)\right]$ and \eqref{eq:lem:key:proof1}, we have
\begin{eqnarray}
&&\!\!\!\!\!\!\!\!\!\!\!~~~\Det\left[\Eb_{11}+\cdots+\Eb_{(k+1)(k+1)}+\Bb(S)\right]\nonumber\\
&&\!\!\!\!\!\!\!\!\!\!\!=\Det[\Eb_{11}+\cdots+\Eb_{kk}+\Bb(S)]\nonumber\\
&&\!\!\!\!\!\!\!\!\!\!\!~~~+\Det\left[\Eb_{11}+\cdots+\Eb_{kk}+\Bb(S\setminus\{k+1\})\right]\nonumber\\
&&\!\!\!\!\!\!\!\!\!\!\!\geq\prod_{j=1}^k\left(1\!+\!\max\{\varsigma_{jj}\!-\!\varepsilon,0\}^2d_{jj}\right)\!\!\!\nonumber\\
&&~~~~~~\times\prod_{j\in S\setminus\{1,2,\cdots,k\}}\left(\max\{\varsigma_{jj}\!-\!\varepsilon,0\}^2d_{jj}\right)\nonumber\\
&&\!\!\!\!\!\!\!\!\!\!\!~~~+\prod_{j=1}^k\left(1\!+\!\max\{\varsigma_{jj}\!-\!\varepsilon,0\}^2d_{jj}\right)\nonumber\\
&&~~~~~~\times\prod_{j\in S\setminus\{1,2,\cdots,k+1\}}\!\!\!\left(\max\{\varsigma_{jj}\!-\!\varepsilon,0\}^2d_{jj}\right)\nonumber\\
&&\!\!\!\!\!\!\!\!\!\!\!= \prod_{j=1}^{k+1}\left(1\!+\!\max\{\varsigma_{jj}\!-\!\varepsilon,0\}^2d_{jj}\right)\nonumber\\
&&~~~~~~\times\prod_{j\in S\setminus\{1,2,\cdots,k+1\}}\!\!\!\left(\max\{\varsigma_{jj}\!-\!\varepsilon,0\}^2d_{jj}\right)\nonumber
\end{eqnarray}
By induction, the result of \eqref{eq:lem:key} follows.

Finally, \eqref{eq:lem:key} reduces to \eqref{eq:lem:key1} since $\varsigma_{jj}=0$ for $\min\{t,r\}< j\leq t$.
\end{proof}

\else
\fi
\bibliographystyle{IEEEtran}
\bibliography{refs_Compound}

\end{document}